 \def\ps@headings{%
 \def\@oddhead{\mbox{}\scriptsize\rightmark \hfil \thepage}%
 \def\@evenhead{\scriptsize\thepage \hfil \leftmark\mbox{}}%
 \def\@oddfoot{}%
 \def\@evenfoot{}}
\newtheorem{thm}{Theorem}
\newtheorem{lemma}[thm]{Lemma}
\theoremstyle{definition}
\newtheorem{definition}{Definition}
\newcommand{\mybm}[1]{\mbox{\boldmath{$#1$}}}
\def\smallint{\begingroup\textstyle \int\endgroup}
\begin{document}

\title{Distributed Power Control and Coding-Modulation Adaptation in Wireless Networks using Annealed Gibbs Sampling}

%
%
%
%
%

\author{Shan~Zhou,
        Xinzhou~Wu,
        and~Lei~Ying
\thanks{}}


\maketitle
\begin{abstract}
In wireless networks, the transmission rate of a link is determined by received signal strength, interference from simultaneous transmissions, and available coding-modulation schemes. Rate allocation is a key problem in wireless network design, but a very challenging problem because: (i) wireless interference is global, i.e., a transmission interferes all other simultaneous transmissions, and (ii) the rate-power relation is non-convex and non-continuous, where the discontinuity is due to limited number of coding-modulation choices in practical systems. In this paper, we propose a distributed power control and coding-modulation adaptation algorithm using {\em annealed Gibbs sampling}, which achieves throughput optimality in an arbitrary network topology. We consider a realistic Signal-to-Interference-and-Noise-Ratio (SINR) based interference model, and assume continuous power space and finite rate options (coding-modulation choices). Our algorithm first decomposes network-wide interference to local interference by properly choosing a ``neighborhood'' for each transmitter and bounding the interference from non-neighbor nodes. The power update policy is then carefully designed to emulate a Gibbs sampler over a Markov chain with a continuous state space. We further exploit the technique of simulated annealing to speed up the convergence of the algorithm to the optimal power and coding-modulation configuration. Finally, simulation results demonstrate the superior performance of the proposed algorithm.
\end{abstract}




\section{Introduction}
Wireless communications have become one of the main means of
communications over the last two decades, in the form of both cellular (WWAN) and home/business access point (WLAN) communications \cite{80211g}. Recently, with the development of data centric mobile
devices, e.g., iPhone, we have seen a renewed interest in enabling
more flexible wireless networks, e.g., ad hoc networks and peer to
peer networks \cite{allerton10}. A key problem in the design of ad
hoc wireless networks is link-rate control, i.e., controlling
transmission rates of the links.  In wireless networks, the
transmission rate of a link is determined by received signal
strength, interference from simultaneous transmissions, and
available coding-modulation schemes. Because wireless interference
is global, and the rate-power relation is non-convex
\cite{EtkinPT07} and non-continuous, distributed link-rate control
in ad hoc wireless networks is a very challenging problem.

One approach to tackle the link-rate control problem in the
literature is to assume that the coding-modulation scheme is predetermined, i.e., all links use the same
coding-modulation scheme. In this case, there is an SINR threshold
associated with each link, and a transmission over the link can be
successfully decoded when the actual SINR is above the threshold.
This assumption is reasonable for voice-centric wireless networks
where the same voice codec is used at all devices. Under this
assumption, the link-rate control is translated into a power control
problem where the objective is to find a set of minimum transmission
powers such that the SINRs at all links are above the thresholds, if
possible at all. This problem has been well studied in the context
of the power control in cellular communications \cite{Yat_95} and
simple iterative algorithms can be shown to converge to the optimal
feasible power allocations.

Another approach in the literature is to assume the link rate is a
continuous function of the SINR of the link
\cite{Nee_03,TanChiSri_09}. For example, a model that has been
extensively adopted is to assume
$r_{ab}=\frac{1}{2}\log_2(1+SINR_{ab}),$ where $r_{ab}$ is the
transmission rate of link $(ab).$ In other words, it assumes that
for each SINR level, the capacity achieving coding-modulation is
available. Under this assumption,  the rate control problem again is
formulated as a power control problem where the objective is to find
a set of powers to maximize system utility defined upon achievable
rates $\sum_{ab}U_{ab}(r_{ab}),$ where $U_{ab}(\cdot)$ is the utility
function associated with link $ab.$ This problem is also well
understood in cellular networks given the recent advances in optimal
power control and rate assignment\cite{HandeRCW08,HuangBH06}, where
distributed iterative algorithms are shown to converge to the
utility maximizing power allocations, after introducing a small
signaling overhead to the cellular air interface. However, these
approaches ignore the non-convex nature of the problem and the
algorithms proposed here converge to the utility maximizing
operating point on \emph{Pareto boundary} of the rate region,
assuming all devices have to transmit all the time.
In the context of ad hoc networks, such approaches can be highly
sub-optimal since the time-sharing, or inter-link scheduling, nature
of the problem has to be considered due to highly non-convex
nature of the rate-power function. Towards this end, queue-length
distributed scheduling is shown to be throughput optimal
\cite{JiaWal_08,LiuYiPro_08,RajShaShi_09,NiTanSri_10}, through the use of MCMC (Markov Chain Monte Carlo) models.
These results however assume collision-based interference model, which in general is over-conservative, and assume fixed transmit power and coding-modulation scheme. Both transmit power and coding-modulation can be adaptively chosen in practical systems. For example, in 802.11g, eight rate options are available, and many 802.11 chip solutions have capability of packet to packet power control with very good granularity (0.5dBm).
{\color{black} Gibbs sampling based distributed power control algorithms have also been developed in \cite{QiaZhaChi10,YanSagZha11}. However, these work again assumes the rate is a continuous function of the SINR level, and ignores the fact that the set of coding-modulation schemes is finite in practical systems.}

In this paper, we extend the framework in \cite{JiaWal_08,LiuYiPro_08,RajShaShi_09,NiTanSri_10}
to develop a distributed joint power control and rate scheduling
algorithm for wireless networks based on the SINR-based interference
model. We assume that each node has a finite number of
coding-modulation choices, but can continuously control transmit power. We propose a distributed algorithm that maximizes the sum of
weighted link rates $\sum_{(ab)} w_{ab} r_{ab},$ where $r_{ab}$ is
the rate of link $ab.$ The main results of this paper are
summarized below:
\begin{itemize}
\item  We consider realistic SINR-based interference model, where a transmission interferes with all other simultaneous transmissions in the network. Our algorithm decomposes network-wide interference to local interference by properly choosing a ``neighborhood'' for each node, and bounding the interference from non-neighbor nodes.

\item We assume continuous power space and finite coding-modulation choices (rate options). The objective of the algorithm is to find a power and coding-modulation configuration that maximizes the sum of weighted link-rates
    \begin{equation} \label{eq: obj_1}
  \begin{array}{r c l}
    \max_{\mathbf{p}, \mathbf{m}} &  & \sum_{(ab) \in {\cal E}}{q_{ab}(t)r_{ab}(m_{ab}, \mathbf p)} \\
    \hbox{subject to} &  & \sum_{b:(ab)\in {\cal E}} p_{ab} \leq p^{\rm max}_a, \forall a\in {\cal V},
  \end{array}
\end{equation}
where $q_{ab}(t)$ is the queue length of link $(ab)$ at time slot $t,$
\footnote{We use $q_{ab}(t)$ as the link weight so that the algorithm is throughput optimal when problem is solved at each time slot.}
$\mathbf p$ is a vector containing the power levels of all the links in the network, $p^{\rm max}$ is the maximum power constraint, and $m_{ab}$ is the coding-modulation scheme.  Due to the nonconvexity and discontinuity of $r_{ab}(\cdot),$  optimization problem (\ref{eq: obj_1}) is very hard to solve in general. Motivated by recent breakthrough of using MCMC to solve MaxWeight scheduling in a completely distributed fashion, we propose a power and coding-modulation update algorithm that emulates a Gibbs sampler over a Markov chain with a continuous state space (the power level of a transmitter is assumed to be continuously adjustable).

\item The algorithm based on the Gibbs sampling may be trapped in a local-optimal configuration for an extended period of time. To overcome this problem, we exploit the technique of simulated annealing to speed up the convergence to the optimal power and coding-modulation configuration. The convergence of the algorithm under annealed Gibbs sampling is proved. From the best of our knowledge, this is the first algorithm that uses annealed Gibbs sampling in a distributed fashion with continuous sample space and has provable convergence.
\end{itemize}

\section{System Model}
We consider a wireless network with single-hop traffic flows. The network is modeled as a graph ${\cal G}=({\cal V},{\cal E}),$ where ${\cal V}$ is the set of nodes, and ${\cal E}$ is the set of directed links. Let $n=|{\cal E}|$ denote the number of links. We assume that time is slotted. Each transmitter $a$ maintains a buffer for each outgoing link $(ab),$ if there is a flow over link $(ab).$ Note that even if there are multiple flows over link $(ab),$ a single queue is sufficient for maintaining the stability of the network. The queue length in time slot $t$ is denoted by $q_{ab}(t).$ Each transmitter $a$ has limited total transmit power $p^{\rm max}_a,$ and $p_{ab}(t)$ denotes the transmit power of link $(ab)$ at time slot $t.$

We assume all links have stationary channels, and each transmitter $a$ can tune its transmit power continuously from $0$ to $p^{\rm max}_a,$ but the number of feasible coding-modulation choices is finite. Each coding-modulation associates with a fixed data rate, and a minimum SINR requirement. Thus, the data rate of a link is a step function of the SINR of the link.
The SINR of link $(ab)$ is
\begin{equation}\label{eq: sinr}
  \gamma_{ab}(t) = \frac{p_{ab}(t) g_{ab}}{n_b + \sum_{(xy)\in {\cal E}, (xy) \not= (ab)}p_{xy}(t) g_{xb}},
\end{equation}
where $n_b$ is the variance of Gaussian background noise experienced by node $b,$ and $g_{ab}$ is the channel gain from node $a$ to node $b.$ In this paper, all $n_b$s and $g_{ab}$s are assumed to be fixed, i.e. we consider stationary channels.

Denote by $r_{ab}(t)$ the transmission rate of $(ab)$ at time slot $t,$ and $A_{ab}(t)$ the number of bits that arrive at the buffer of the transmitter of link $(ab)$ at the end of time slot $t.$ Then, the queue length $q_{ab}(t)$ evolves as following:
\begin{equation}\label{eq: QueueingDynamic}
    q_{ab}(t+1) = \left[q_{ab}(t) - r_{ab}(t)\right]^+ + A_{ab}(t),
\end{equation}
where $[x]^+ = \max\{0,x\}.$

Let ${\cal P}\subset \mathbb{R}^n$ denote the set of all feasible power configurations of the network, i.e., $${\cal P} = \{{\mathbf p}: \sum_{b: (ab)\in {\cal E}} p_{ab}\leq p^{\rm max}_a, p_{ab}\geq 0\}.$$ For each link $(ab),$ given a power configuration $\mathbf p,$ the SINR of the link $\gamma_{ab}$ is determined by equality (\ref{eq: sinr}). The transmission rate can be written as $r_{ab}(m_{ab}, \mathbf{p}),$ where $m_{ab}$ is the coding-modulation scheme.

In this paper, we assume a transmitter always selects the coding-modulation scheme with the highest rate under the given SINR. {\color{black}Each coding-modulation scheme has a minimum requirement on the SINR level.} So $m_{ab}$ is a function of $\mathbf p,$ and rate $r_{ab}$ can be written as a function of $\mathbf{p}:$ $r_{ab} = r_{ab}(\mathbf p).$
Then, we define ${\mathcal R}$ as the set of achievable rate vectors under feasible power configurations and modulations, i.e.,
$${\cal R} = \{{\mathbf r}({\mathbf p}): {\mathbf p} \in {\cal P}\}.$$

The \emph{capacity region} of the network is the set of all arrival rate vectors $\mybm{\lambda}$ for which there exists a power control algorithm that can stabilize the network, i.e., keep the queue lengths from growing unboundedly. It is well known that the capacity region is \cite{TasEph_92}:
\begin{equation}\label{eq: DefCpctRg}
    \Lambda = \{\mybm{\lambda} | \exists \mybm{\mu} \in Co(\mathcal R), \mybm {\lambda} \prec \mybm{ \mu}\},
\end{equation}
where $Co(\mathcal R)$ is the convex hull of the set of achievable rates with feasible power configurations, and $\prec$ denotes componentwise inequality. A power control and coding-modulation adaptation algorithm is said to be \emph{throughput optimal} if it can stabilize the network for all arrival rates in the capacity region $\Lambda.$


It is well-known that if a rate control algorithm can solve the MaxWeight problem \cite{TasEph_92} for each time slot, then the algorithm is throughput optimal. The focus of this paper is to develop a power-control and coding-modulation adaptation algorithm to solve the following MaxWeight problem:
\begin{equation} \label{eq: obj}
  \begin{array}{r c l}
    \max &  & \sum_{(ab) \in {\cal E}}{r_{ab}(\mathbf p)q_{ab}(t)} -\epsilon \sum_{(ab) \in {\cal E}} p_{ab} \\
    \hbox{subject to} &  & {\mathbf p}\in {\mathcal P}.
  \end{array}
\end{equation}
Recall that since $r_{ab}(\mathbf{p})$ is a step function of $p_{ab},$ multiple power configurations may result in the maximum weighted sum. {\color{black} We therefore added a penalty function $-\epsilon \sum_{(ab)}p_{ab}$ with a small $\epsilon$ in the objective function so that the algorithm yields a power configuration whose weighted sum-rate is close to the optimal one and its sum power is small. Without this penalty term, the algorithm may result in a solution with maximum sum weighted rate but large $\sum_{(ab)} p_{ab}.$ This penalty term makes sure the proposed algorithm is energy efficient.} 

\section{Algorithm}


We are interested in obtaining the optimal power, coding, and modulation configuration that maximizes the weighted-sum-rate while minimizing the total transmit power.  We can solve this problem by constructing a Markov chain whose state is the power configuration, and the stationary density satisfies
\begin{equation}\label{eq: Gibbsdensity}
    \pi(\mathbf{p}) = \frac{1}{Z(K)}e^{\frac{1}{K}(\sum_{(ab)\in {\cal E}} r_{ab}({\mathbf p})q_{ab}(t)- \epsilon \sum_{(ab)\in \mathcal E} p_{ab})}
\end{equation}
Then, letting $K \rightarrow 0,$ the Markov chain is in state $\mathbf p^*,$ the optimal solution to problem (\ref{eq: obj}) {\color{black} with probability $1-\delta$ for any $\delta>0.$}. See \cite{Kau_07} \cite{Bre_99} for detail.

Gibbs sampler is a classical way to construct such a Markov Chain with stationary distribution (\ref{eq: Gibbsdensity}). Given the current state ${\mathbf p}(t)=\mathbf{p},$ the Gibbs sampler selects a link, say $(ab),$ in a predetermined order and changes the transmit power to $p_{ab}$ with probability
\begin{align*}
   \pi(p_{ab}| \mathbf p_{-ab})  = \frac{\pi(p_{ab}, \mathbf p_{-ab})}{\int_{0}^{p_a^{\rm max}}\pi(\hat{p}_{ab}, \mathbf p_{-ab}) d\hat{p}_{ab}} \nonumber
\end{align*}  where $\mathbf{p}_{-ab}$ denotes the vector of transmit powers except that of link $(ab).$ It can be verified that the stationary distribution of this Markov chain is (\ref{eq: Gibbsdensity}) by detailed balance equation, i.e., $\pi(p_{ab}| \mathbf p_{-ab}))  \pi( (p^\prime_{ab},\mathbf p_{-ab}) ) = \pi(p^\prime_{ab}| \mathbf p_{-ab}) \pi( (p_{ab},\mathbf p_{-ab}) ).$ Therefore, if the power vector is updated according to this Markov chain, it will converge to $\mathbf{p}^*$ with probability one when $K\rightarrow 0.$

There are, however, several difficulties in using Gibbs sampler for distributed rate control in wireless networks.
\begin{enumerate}
\item First, to compute the conditional distribution, link $(ab)$ must know the rate of all the links in the network, which incurs significant communication overhead.

\item Second, the updating sequence of a Gibbs sampler is predefined, which results in the need of a central controller.

\item Further, when $K$ is close to zero,
\begin{align*}
   &\pi(p_{ab}| \mathbf p_{-ab})= \frac{\pi(p_{ab}, \mathbf p_{-ab})}{\int_{0}^{p_a^{\rm max}}\pi(\hat{p}_{ab}, \mathbf p_{-ab}) d\hat{p}_{ab}} \nonumber\\
  =& \frac{e^{(\sum_{(xy)\in \mathcal E} r_{xy}(p_{ab}, \mathbf p_{-ab})q_{xy} - \epsilon \sum_{(ab)\in \mathcal E} p_{ab})/K}}{ \int_{0}^{p_a^{\rm max}} e^{(\sum_{(xy)\in \mathcal E} r_{xy}(\hat{p}_{ab}, \mathbf p_{-ab})q_{xy} - \epsilon \sum_{(ab)\in \mathcal E} p_{ab})/K}d\hat{p}_{ab}}\\
  &\rightarrow 1,
\end{align*} for $p_{ab}$ such that $$p_{ab}\in \arg\max_{\hat{p}_{ab}} \sum_{(xy)\in \mathcal E} r_{xy}(\hat{p}_{ab}, \mathbf p_{-ab})q_{xy} - \epsilon \sum_{(ab)\in \mathcal E} p_{ab}.$$ In other words, the power configuration may stay in a local optimum for a long period of time. This is in fact a critical weakness of MCMC methods. We will use simulated annealing technique in our algorithm to overcome this weakness.
\end{enumerate}

%

\subsection{Neighborhood and Virtual Rate}

To overcome the global interference, we note that because of channel attenuation, interference caused by a remote transmitter in general is negligible. We therefore define a neighborhood for each node $a.$ We say a node $b$ is a \emph{one-hop neighbor} of node $a$ if $\max\{g_{ab}, g_{ba}\}\geq \alpha,$ i.e., the channel gain is above certain threshold. Denote by ${\cal N}^1(a)$  the set of one-hop neighbors of node $a,$ where the superscript indicates it is the set of one-hop neighbors. We further denote by ${\cal N}^2(a)$ the set of two-hop neighbors of node $a,$ i.e., node $b$ belongs to ${\cal N}^2(a)$ if $b
\not \in {\cal N}^1(a),$ and $b\not =a,$ and there exists a node $c$ such that $c\in{\cal N}^1(a)$ and $b\in{\cal N}^1(c).$



Then, we define $\Upsilon_{ab}(t)$ to be
\begin{align*}
&\displaystyle \Upsilon_{ab}(t)=\sum_{(xy):x\in{\cal N}^1(b)} p_{xy}(t) g_{xb} + \hat{n}_b
\end{align*} which is called \emph{noise+partial-interference} at node $b$ for link $(ab)$ at time slot $t.$ {\color{black} We further define $\hat{\Upsilon}_{ab}(p_{xy})$ to be the noise+partial-interference at node $b$ if link $(xy)$ uses transmit power $p_{xy},$ and $\Upsilon_{ab}^m$ to be the maximum noise+partial-interference allowed to achieve the SINR requirement of coding-modulation scheme $m,$ while link $(ab)$ does not change its power level.} Let $\hat{n}_b$ denote $n_b + \xi_{ab},$ where $\xi_{ab}$ is an upper bound on the interference experienced at node $b$ from the non-neighboring transmitters of $b.$ We assume $\xi_{ab}$ is known. By including this upper bound $\xi_{ab}$ in the SINR computation, we guarantee that the SINR of link $(ab)$ is a function of its neighbors' transmit powers and is independent of non-neighbor nodes. This localizes the interference.

Given the definition of noise+partial-interference, the virtual rate of link $(ab)$ is defined to be
\begin{equation}\label{eq: VirtualRate}
    \tilde{r}_{ab}(t) = r_{ab}\left(m_{ab},\frac{{p}_{ab}g_{ab}}{\Upsilon_{ab}(t)}\right).
\end{equation}

Observe that although the power level is continuous, and the SINR of neighboring links are continuous, the virtual rate choices are discrete and finite. Suppose $(ab)$ is changing its power and link $(xy), y\in \mathcal N^1(a),$ is affected. For each coding-modulation scheme of link $(xy),$ $m\in \mathcal M_{xy},$ let $\gamma_{xy}^m$ denote the SINR requirement of $m,$ and $\Upsilon_{xy}^m$ denote the corresponding noise+partial-interference requirement. Assuming the transmit powers of all other nodes are fixed, the power level $p_{ab}\in [0, p_a^{\rm max}],$ such that $\hat{\Upsilon}_{xy}(p_{ab}) = {\Upsilon}_{xy}^m,$ is called a \emph{critical power}, which is highest power node $a$ can use for coding-modulation choice $m$ to be feasible over link $(xy).$


\subsection{Decision Set}

To overcome the issue of predefined update sequence in classic Gibbs sampling, we adopt the technique proposed in \cite{NiTanSri_10} to generate a \emph{decision set} at the beginning of each time slot.
\begin{definition}
A decision set $\mathcal D$ is a set of transmitters such that, for any two transmitters $a$ and $x$ in $\mathcal D,$
$x \not \in {\cal N}^1(a) \cup {\cal N}^2(a).$
\end{definition}
Clearly, two transmitters $a$ and $x$ in the decision set are not one-hop or two-hop neighbors.  In the proposed algorithm, only the links in the decision set are allowed to update their transmit powers. By properly generating a decision set, the evolution of power configuration is a reversible Markov chain with stationary density (\ref{eq: Gibbsdensity}).

\subsection{Required Information}
We further assume that node $a$ has the following knowledge:
\begin{itemize}
\item the channel gain from node $a$ to its one-hop neighbors, i.e., $g_{ay}$ for all $y \in {\cal N}^1(a).$
\item for each link whose receiver is $a'$s one-hop neighbor, i.e., $(xy): y\in \mathcal N^1(a):$
\begin{itemize}
\item the virtual transmit power of $(xy),$ i.e., $\tilde{p}_{xy}(t-1).$ {\color{black} where the virtual transmit power is the intermediate power level obtained during each time slot of the proposed algorithm. The real transmit power is updated at the end of every super time slot. So the virtual power is an intermediate result obtained and maintained during the calculation and is not the actual transmit power. }
\item the channel gain of $(xy),$ i.e., $g_{xy}.$
\item the virtual partial-interference-plus-noise of $(xy),$ i.e., $\tilde{\Upsilon}_{xy}(t).$\footnote{calculated based on virtual transmit powers}
\item the queue length of $(xy),$ i.e., $q_{xy}.$
\item the feasible modulations of $(xy),$ and the maximum allowed noise+partial-interference, ${\Upsilon}_{xy}^m,$ of each modulation $m.$
\end{itemize}
\end{itemize}
Notice that $\tilde{p}_{xy}(t-1), q_{xy},$ and $\tilde{\Upsilon}_{xy}(t)$ change over time. We will explain the way node $a$ obtains these values from node $x$ in the algorithm. Further we assume all channel gains are known to $a,$ and are fixed. The feasible modulations and the minimum SINR requirement for each modulation are assumed to be known a-priori, and do not need to be exchanged.

\subsection{Distributed Power Control and Coding Modulation Adaptation Algorithm}

We now present the proposed algorithm, where the evolution of power configuration emulates a Gibbs sampler. To improve the convergence of the Gibbs sampler, we exploit the technique of simulated annealing \cite{GemGem_90}.

We group every $T$ time slots into a super time slot. In each super time slot, we run the algorithm for $T$ times in the background of each node. In ${t}^{\rm th}$ time slot of a super time slot, the value $K$ is set to be $K_{t}=\frac{K_0}{\log(2+{t})},$ where $K_{t}$ is the ``temperature'' in the terminology of simulated annealing, and $K_0$ is a positive constant that can be tuned to control the convergence of the proposed algorithm. The idea of the simulated annealing is to start with a high temperature (large $K$) under which the Markov chain mixes rapidly. Then by slowly decreasing the temperature, the state of the Markov chain will converge to the optimal configuration. It has been well-known that annealing can significantly reduce the convergence time. The structures of time slot and super time slot are illustrated in Figure \ref{fig: timeslot} and \ref{fig: supertimeslot}.

\begin{figure}[hbt!]
\begin{center}
  \includegraphics[width=2.8in]{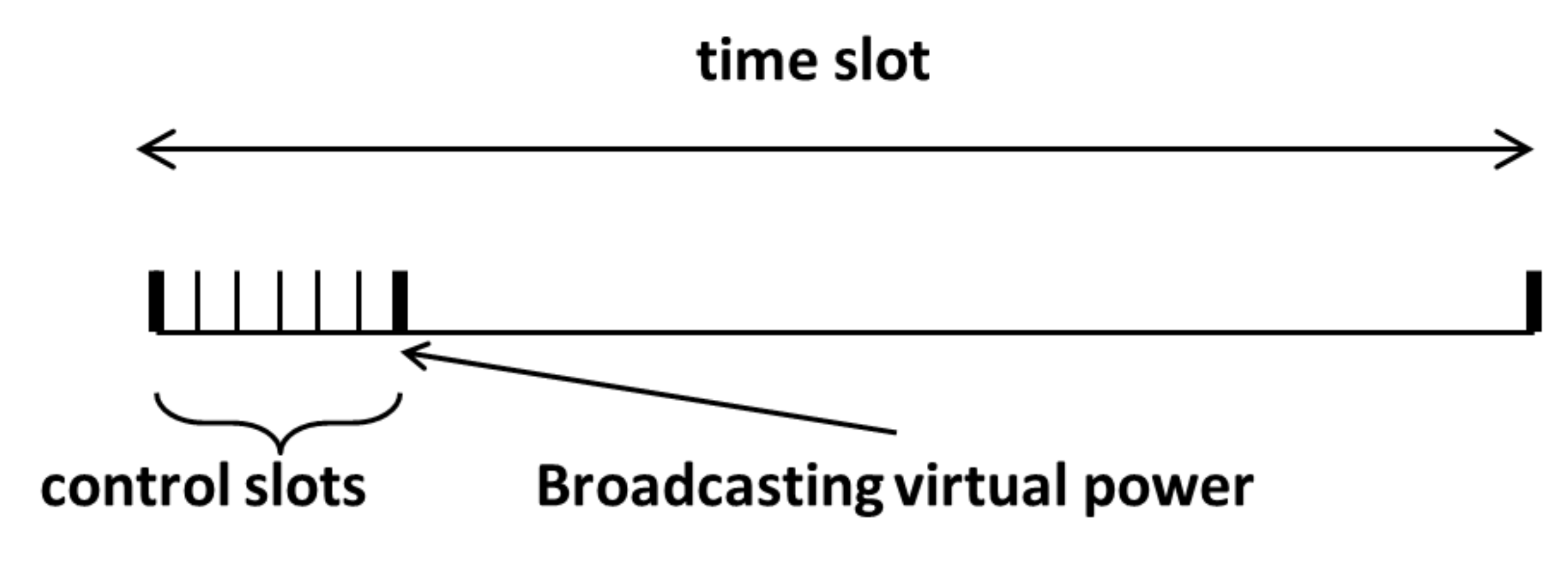}\\
  \end{center}\caption{Time slot structure}\label{fig: timeslot}
\end{figure}

\begin{figure}[hbt!]
\begin{center}
  \includegraphics[width=3.8in]{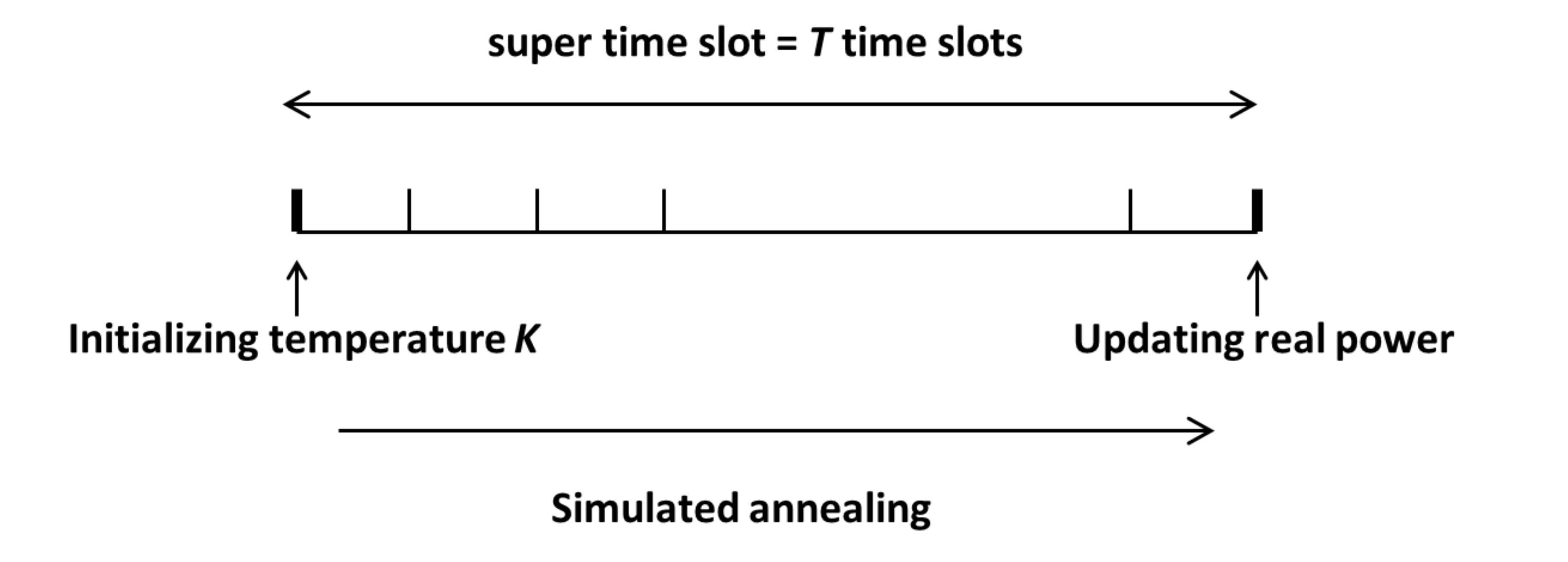}\\
  \end{center}\caption{Super time slot structure}\label{fig: supertimeslot}
\end{figure}

All nodes maintain virtual power $\tilde{p},$ and the initial power configuration $\tilde{p}_{ab}(0)=0,$ known by all the nodes in the network. The following algorithm describes the process of updating virtual power configuration following an annealed Gibbs sampler. The real transmit power is then determined based on actual SINR.

At $t^{th}$ time slot of a super time slot, the algorithm works as follows:
\begin{enumerate}

\item[(1)]{\bf Generating decision set:} Each time slot consists of $W$ control slots at the beginning. A decision set is determined at the end of the $W$ control slots. Only the transmitters in the decision set update their virtual power levels at this time slot. At time slot $t,$ transmitter $a$ contends for being in the decision set as follows:
\begin{enumerate}
\item[(i)] Node $a$ uniformly and randomly selects an integer backoff time $T_a$ from $[0, W-1]$ and wait for $T_a$ control slots.
\item[(ii)] If $a$ receives an INTENT message from another transmitter $x$ such that $$x\in {\cal N}^1(a)\cup {\cal N}^2(a)$$ before control slot $T_a+1,$ node $a$ will not be included in the decision set in this time slot. Here, we assume the INTENT message from $x$ has the id of $x$ and the signal is strong enough, so that $x$'s one-hop and two-hop neighbors, e.g., node $a,$ know this INTENT message is sent by $x.$
\item[(iii)] If node $a$ senses a collision of INTENT messages from nodes $x,$ $x\in {\cal N}^1(a)\cup {\cal N}^2(a),$ $a$ will not be in the decision set in this time slot.
\item[(iv)] If node $a$ does not receive any INTENT message from its one-hop or two-hop neighbors before control slot $T_a+1$, node $a$ will broadcast an INTENT message to its one-hop and two-hop neighbors in control slot $T_a+1$.
    \begin{enumerate}
      \item If the INTENT message from node $a$ collides with another INTENT message sent by node $x\in {\cal N}^1(a)\cup {\cal N}^2(a),$ $a$ will not be in the decision set in this time slot.
      \item If there is no collision, node $a$ will be included in the decision set in this time slot.
    \end{enumerate}
\end{enumerate}
We note that $W$ is selected to be large enough so that the collision of the INTENT messages happens with low probability.


\item[(2)] {\bf Link selection:} Let $d_a$ denote the outgoing degree of node $a,$ i.e., $d_a = |\{b: (ab)\in {\mathcal E}\}|.$
In this step, each transmitter $a\in {\cal D}$ selects an outgoing link $(ab)$ to update its virtual power $\tilde{p}_{ab}$ as following:
\begin{itemize}
  \item [(i)]If there was an active outgoing link $(ab)$ such that $\tilde{p}_{ab}(t-1)>0,$ $a$ will update the power of link $(ab)$ in time slot $t,$ with probability $\frac{1}{d_a}.$
  \item [(ii)] If there was no active outgoing link $(ab)$ such that $\tilde{p}_{ab}(t-1)>0,$ $a$ uniformly randomly selects a link $(ab)$ from its $d_a$ outgoing links, and then updates its virtual power $\tilde{p}_{ab}.$
\end{itemize}

\item[(3)]{\bf Critical power level computation:} Node $a$ computes the critical power level $\tilde{p}_{c,m}(ab, xy)$ as follows:
\begin{enumerate}
  \item[(i)] Node $a$ computes the critical partial-noise-plus-interference of link $(xy)$ corresponding to each $\gamma_{xy}^m, m\in {\mathcal M}_{xy}:$
  $$\tilde{\Upsilon}_{xy}^m = \frac{\tilde{p}_{xy}(t-1)g_{xy}}{\gamma_{xy}^m}.$$
  \item[(ii)] Node $a$ computes the critical power of $\tilde{p}_{c,m}(ab,xy),$ such that when link $(ab)$ uses this power level, the resulting partial-noise-plus-interference of link $(xy)$ is $\tilde{\Upsilon}_{xy}^m:$
  \begin{align*}
  &\tilde{p}_{c,m}(ab, xy)  \\
  =&\min\left\{p^{\rm max}, \left[ \tilde{p}_{ab}(t-1) + \frac{\tilde{\Upsilon}_{xy, m} - \tilde{\Upsilon}_{xy}(t-1)}{g_{ay}}\right]^+\right\}.
\end{align*}
  (Some modulations of link $(xy)$ need very high SINR, which cannot be achieved even link $(ab)$ reduces its power to 0. For these modulations, we just let $\tilde{p}_{c,m}(ab, xy)$ be zero, we will consider the $0$ critical power separately in the following step.)
\end{enumerate}

\item[(4)] {\bf Virtual rates computation:} Now for each node $a$ in the decision set ${\cal D}$, it computes the virtual rate of link $(ab),$ and the virtual rate of the links whose receiver is $a'$s neighbor as following:
\begin{itemize}
    \item[(i)] Arrange the critical power levels $$\{\tilde{p}_{c,m}(ab, xy), \forall (xy) \quad {\textrm s.t. }\quad y\in {\cal N}^1(a)\}$$ in ascending order, denoted by
     $$0=\tilde{p}_{c,0}<\tilde{p}_{c,1}<\cdots <\cdots = p_a^{\rm max}$$
    \item[(ii)] Compute the SINR of link $(xy),$ when the power of link $(ab)$ is zero:
    $$\gamma_{xy}^0 = \frac{\tilde{p}_{xy}(t-1) g_{xy}}{\tilde{\Upsilon}_{xy}(t-1) - \tilde{p}_{ab}(t-1) g_{ay}}.$$
    Further, find the coding-modulation of link $(xy)$ with the largest transmission rate corresponding to this SINR. Let the coding-modulations of all the neighboring links of link $(ab)$ be denoted by a vector ${\mathbf m}^0$.
    \item[(iii)] Given this initial coding-modulation vector ${\mathbf m}^0,$ $a$ obtains the coding-modulation vector:
    $${\mathbf m}^i,$$ corresponding to each critical power $\tilde{p}_{c,i}, i=1, 2, \cdots.$
    \item[(iv)] Obtain the rates $\tilde{r}_{(xy)}(\tilde{p}_{c,i})$ related to the coding-modulations $m_{(xy)}^i$ of neighboring links, when $$\tilde{p}_{ab} \in [\tilde{p}_{c,i}, \tilde{p}_{c, i+1}), \, i=0, 1, \cdots.$$  Note that for each link, the $m_{(xy)}^i$ is the coding-modulation scheme with the highest transmission rate assuming node $a$ transmits with power $\tilde{p}_{c,i}.$
    \item[(v)] Compute the virtual local weight, under each critical power level $\tilde{p}_{c,i}:$
    $$\tilde V_{ab}(\tilde{p}_{c,i}) = \sum_{y\in {\cal N}^1(i), (xy)\in {\mathcal E}} \tilde r_{xy}(\tilde{p}_{c,i}) q_{xy}$$
    for $i=0, 1, \cdots,$ where $q_{xy}$ is the queue length at the beginning of the super time slot.
\end{itemize}

\item[(5)] {\bf Power-level selection:} Let $Z_{ab}$ be the normalization factor defined as
$$Z_{ab} = \sum_{i} \left(e^{-\frac{\epsilon \tilde{p}_{c,i}}{K_{t}}} - e^{-\frac{\epsilon \tilde{p}_{c, i+1}}{K_{t}}}\right)e^{\frac{\tilde V_{ab}(\tilde{p}_{c,i})}{K_{t}}}.$$
Node $a$ first selects a power interval $[\tilde{p}_{c,i}, \tilde{p}_{c, i+1})$ with following probability:
\begin{equation}\label{eq: algprob}
    \Pr\left(\tilde{p}_{ab} \in [\tilde{p}_{c,i}, \tilde{p}_{c, i+1})\right)  = \frac{1}{Z_{ab}}  \left(e^{-\frac{\epsilon \tilde{p}_{c,i}}{K_{t}}} - e^{-\frac{\epsilon \tilde{p}_{c, i+1}}{K_{t}}}\right) e^{\frac{\tilde V_{ab}(\tilde{p}_{c,i})}{K_{t}}}.
\end{equation}
Suppose the interval $[\tilde{p}_{c,k}, \tilde{p}_{c, k+1})$ is selected, then node $a$ randomly selects a virtual power level $\tilde{p}_{ab}(t)$ according to the following probability density function (pdf):
\begin{equation}\label{eq: algdensity}
     f_{ab}(p | p\in [\tilde{p}_{c,k}, \tilde{p}_{c, k+1})) = \frac{\epsilon}{K_{t}} e^{-\frac{\epsilon p}{K_{t}}}\left(e^{-\frac{\epsilon \tilde{p}_{c,k}}{K_{t}}} - e^{-\frac{\epsilon \tilde{p}_{c, k+1}}{K_{t}}}\right)^{-1},
\end{equation}
which can be done by using the inverse transform sampling method.

\item[(6)] {\bf Information exchange:} If the virtual power of a node has changed, i.e., $\tilde{p}_{ab}(t-1)\not=\tilde{p}_{ab}(t),$ then node $a$ broadcasts $\tilde{p}_{ab}(t)$ to all its one-hop neighbors. Each neighbor $y$ computes the virtual partial-interference-plus-noise of $(xy),$ i.e., $\tilde{\Upsilon}_{xy}(t).$ If $\tilde{\Upsilon}_{xy}(t-1)\not=\tilde{\Upsilon}_{xy}(t),$ node $y$ broadcasts $\tilde{\Upsilon}_{xy}(t)$ to all its one-hop neighbors.

\item[(7)] {\bf Update real transmit power:} At the end of a super time slot, node $a$ updates its transmit power for link $(ab)$ to be $p_{ab}=\tilde{p}_{ab}(T).$ Node $b$ then measures the actual SINR  $\gamma_{ab}$ and reports to node $a.$ Node $a$ selects the coding-modulation scheme $m_{ab}$ with the highest rate among those $m$ such that $\gamma_{(ab),m}\leq \gamma_{ab}.$ Packets of flow $(ab)$ are transmitted with power $p_{ab}$ and coding-modulation scheme $m_{ab}.$ Note that the real transmit powers are updated only once every $T$ time slots.
\end{enumerate}

We now present a simple example to show how the algorithm works.

{\bf Example:}\\
Consider the wireless network depicted in Fig. \ref{fig: eg1}. There are 4 links in the network.
Assume the channel gain, background noises and queue lengths are:
$g_{ab} = g_{cd} = g_{ef} = 1,$ $g_{cb} = g_{cf} = g_{ed} = g_{ad} = \frac{1}{4},$ $n_b = n_d = n_f = 1,$ and $q_{ab} = 10, q_{cd} = 100, q_{ef}=10.$ The virtual power level of the links in the previous time slot are $\tilde{p}_{ab}(t-1) = 15, \tilde{p}_{cd}(t-1) = 0, \tilde{p}_{ef}(t-1) =10.$ Further, we assume that there are two feasible coding-modulation schemes for each link: BPSK with rate 1, and QPSK with rate 2, and the SINR requirement for the modulations are 4 and 8, respectively.


\begin{figure}[hbt!]
\begin{center}
  \includegraphics[width=2in]{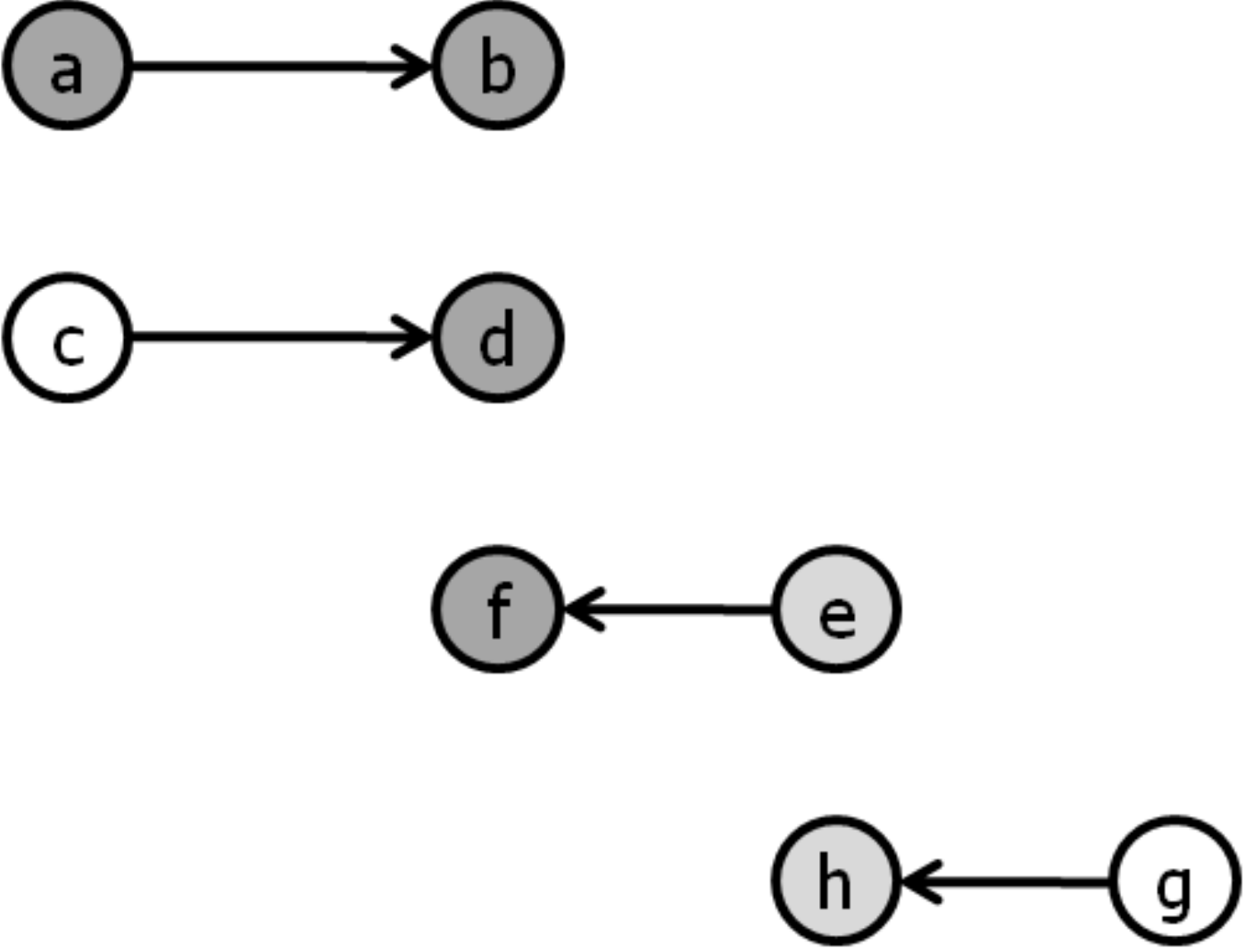}\\
  \end{center}\caption{A simple example}\label{fig: eg1}
\end{figure}

In this example, we focus on link $(cd).$ Assume that the set of $c$'s one-hop neighbors is ${\cal N}^1(c) = \{a, b, d, f\},$ and the set of two-hop neighbors is ${\cal N}^2(c) =\{e, h\}.$

Under this neighborhood structure, if $c$ changes its power, it will change three links' virtual SINR, and their virtual rates, i.e., links $(ab), (cd)$ and $(ef).$

In the algorithm, node $c$ randomly select a power level based on its interference to the neighboring links with each feasible power level.

\begin{enumerate}

\item[(1)]{\bf Select decision set:} Assume that the number of control slots is $W = 5,$ and the backoff time generated by the transmitters are $$T_a=2, T_c = 0, T_e = 3, T_g = 1.$$
Then, $c$ broadcasts an INTENT message at control slot $1$, which is received by $a$ and $e.$ Node $g$ will ignore this INTENT message even if it can receive it, because $g$ is not within the two-hop range of $c.$ In control slot $2$, node $g$ broadcasts an INTENT message, which is received by node $e.$ Since both transmitters $a$ and $e$ receive the INTENT message sent by $c,$ they are not in the decision set. And the decision set is $\{c, g\}.$

\emph{Remark:} Note that $c$'s transmit power affects the virtual rate of links $(ab),$ $(cd),$ and $(ef),$ while $g$'s power affects the virtual rate of link $(gh)$ only. Thus, no link's virtual rate is affected by both of $c$ and $g.$

\item[(2)] {\bf Information exchange:} Suppose the power level of link $(ab)$ has changed in time slot $t-1,$ node $a$ then has broadcast $p_{ab}(t-1)$ to all its two-hop neighbors. Node $c$ has received this message, which shows that $c$ always knows the power level of $(ab)$ and $ef.$

\item[(3)] {\bf Link selection:} Each transmitter only has one link, so if a transmitter is selected to be in the decision set, its outgoing link will be selected.

\item[(4)]{\bf Critical power computation:} Node $c$ knows the virtual partial-interference-plus-noise experienced by links $(ab),(cd), (ef):$
\begin{align*}
  \tilde{\Upsilon}_{ab} (t-1) &= n_b + \tilde{p}_{cd}g_{cb} = 1 \\
  \tilde{\Upsilon}_{cd} (t-1)& = n_d + \tilde{p}_{ab}g_{ad} + \tilde{p}_{ef}g_{ed} = 7.25\\
  \tilde{\Upsilon}_{ef} (t-1) &=n_f + \tilde{p}_{cd}g_{cf} = 1
\end{align*}
Then, node $c$ can estimate its impact on links $(ab)$ and $(ef),$ when it varies transmit power from 0 to $p^{\rm max}.$

Fig. \ref{fig: cp} illustrates this impact. We can see from the figure that there are 5 critical power levels besides 0 and $p^{\rm max},$ which are $1, 3.5, 6, 11, \text{and } 29.$ Take critical power level $1$ for example, it means that if the power of link $(cd)$ is greater than $1,$ then the virtual SINR of link $(ef)$ will be below $8,$ and link $(ef)$ will only be able to use BPSK.
\begin{figure}[!hbt]
\begin{center}
  \includegraphics[width=3in]{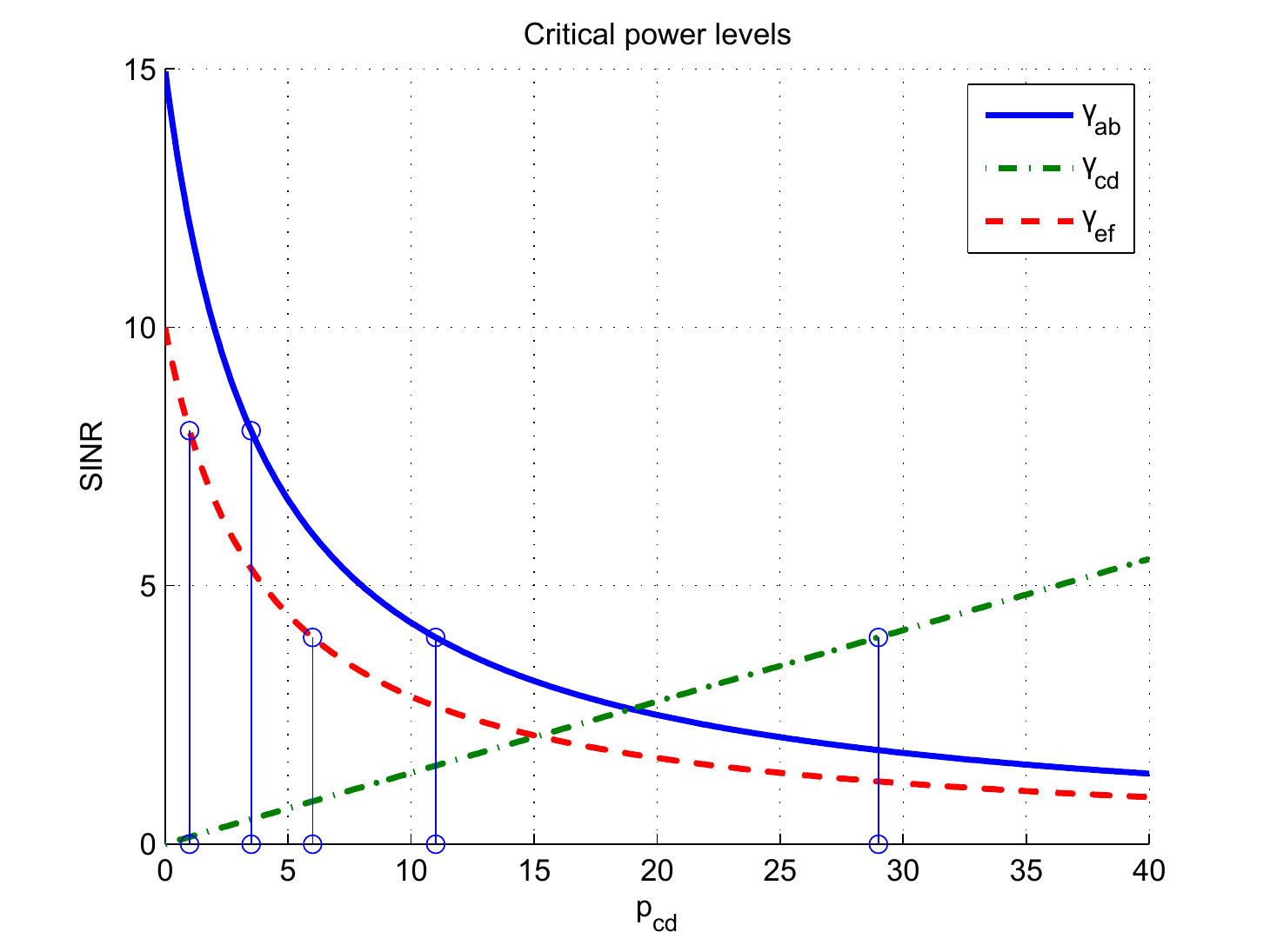}\\
  \end{center}\caption{Critical power levels}\label{fig: cp}
\end{figure}

\item[(5)] {\bf Virtual rates computation:} Now, $c$ knows the critical power levels and the coding-modulation schemes corresponding to each interval between the critical power levels. Thus, $c$ can calculate $\tilde V_{cd}(\tilde{p}_{cd})$ accordingly, which is shown in Table \ref{tb: example}. Given these virtual rates, node $c$ then samples a power level according to the distribution in equalities (\ref{eq: algprob}) and (\ref{eq: algdensity}).
\begin{table*}[t]
\centering
\caption{Critical power level and the resulted virtual rate}
\begin{tabular}{|c|c|c|c|c|c|c|} \hline
  $p_{cd}$ & $[0,1)$ & $[1,3.5)$  & $[3.5,6)$ & $[6,11)$ & $[11,29)$ & $[29,40]$ \\\hline
  $(\tilde r_{ab}(p_{cd}), \tilde r_{cd}(p_{cd}), \tilde r_{ef}(p_{cd}))$ & $(2,0,2)$&$(2,0,1)$&$(1,0,1)$&$(1,0,0)$&$(0,0,0)$&$(0,1,0)$ \\\hline
  $\tilde V_{cd}(p_{cd})$ & $40$ & $30$ & $20$ & $10$ & $0$ & $100$\\\hline
\end{tabular}
\label{tb: example}
\end{table*}
\end{enumerate}
\subsection{Analysis}
In classical Gibbs sampler, the state of each link is updated in a sequential manner. In contrast, the Gibbs sampler used in our algorithm is parallelized and distributed, which leverages the distributed characteristic of wireless networks. In the following lemma, we prove that our algorithm generates a sequence of power configurations which form a Markov chain with some desired stationary density.
\begin{lemma}\label{lem: Stationary}
For a fixed temperature, i.e., without updating the temperature in the power control algorithm, and fixed queue lengths, the sequence of the power configurations, $\mathbf p(t)$, generated by the power control algorithm, forms a Markov chain with the stationary density:
$$\pi_{K_t}(\mathbf p) = \frac{1}{Z(K_{t})}e^{\frac{\tilde{V} (\mathbf p) -\epsilon \sum_{(ab)\in {\mathcal E}}p_{ab}}{K_{t}} },$$ where
$Z(K_{t}) = \int_{{\mathbf p} \in {\mathcal P}} e^{\frac{\tilde{V} (\mathbf p)- \epsilon \sum_{(ab)\in {\mathcal E}}p_{ab}}{K_{t}}} d\mathbf p$ is unknown normalization factor.
\end{lemma}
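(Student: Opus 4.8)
The plan is to prove the lemma by showing that the Markov chain produced by the algorithm is reversible with respect to the claimed density $\pi_{K_t}$, i.e.\ that it satisfies the detailed balance equation. Because the decision-set mechanism makes the update at time slot $t$ depend only on the configuration $\mathbf{p}(t-1)$ together with fresh randomness (the backoff draws, the link selection, and the power sample), the sequence $\mathbf{p}(t)$ is manifestly a time-homogeneous Markov chain for fixed $K_t$ and fixed queue lengths; the substance lies in identifying its stationary density. I would first analyze a single-link update in isolation and then lift the argument to the parallel, distributed setting using the separation property of the decision set.

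For a single active transmitter $a$ updating link $(ab)$, I would first verify that the two-stage power draw in Steps~(4)--(5) samples $\tilde{p}_{ab}$ from the density proportional to $\exp\!\big((\tilde{V}_{ab}(p)-\epsilon p)/K_t\big)$ on $[0,p_a^{\max}]$. The local weight $\tilde{V}_{ab}(\cdot)$ is piecewise constant between consecutive critical powers $\tilde{p}_{c,i}$, so on each interval $[\tilde{p}_{c,i},\tilde{p}_{c,i+1})$ the target density reduces to $e^{\tilde{V}_{ab}(\tilde{p}_{c,i})/K_t}\,e^{-\epsilon p/K_t}$. Integrating $e^{-\epsilon p/K_t}$ over the interval produces exactly the factor $\big(e^{-\epsilon \tilde{p}_{c,i}/K_t}-e^{-\epsilon \tilde{p}_{c,i+1}/K_t}\big)$ appearing in the interval-selection probability (\ref{eq: algprob}) and in the normalizer $Z_{ab}$, while the within-interval pdf (\ref{eq: algdensity}) is the corresponding truncated exponential; multiplying the two recovers the target density. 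The decisive structural fact I would then invoke is localization: since changing $p_{ab}$ alters only the virtual rates of links whose receiver lies in $\mathcal{N}^1(a)$, the change in the global virtual weight equals the change in the local weight, $\tilde{V}(p_{ab},\mathbf{p}_{-ab})-\tilde{V}(p'_{ab},\mathbf{p}_{-ab})=\tilde{V}_{ab}(p_{ab})-\tilde{V}_{ab}(p'_{ab})$. Consequently the conditional law from which $a$ samples is precisely $\pi_{K_t}(\,\cdot\mid \mathbf{p}_{-ab})$, and detailed balance for a single-link transition follows in one line.

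I would then establish symmetry of the remaining randomness so that it cancels in the balance equation. The decision-set generation depends only on the state-independent backoff draws, so the probability of realizing a given decision set $\mathcal{D}$ is the same at $\mathbf{p}$ and at any $\mathbf{p}'$ differing only in the powers of links controlled by $\mathcal{D}$. The link-selection rule of Step~(2) must likewise give equal forward and backward selection probabilities; here I would check that the rule assigns probability $1/d_a$ to link $(ab)$ both when $(ab)$ is the active outgoing link and when $a$ has no active outgoing link, so that moving $p_{ab}$ between a positive value and $0$ leaves the selection probability unchanged. Granting this, all state-independent factors cancel and detailed balance holds for each individual update.

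The main obstacle, and the step I would treat most carefully, is the passage from a single update to the simultaneous updates performed by the entire decision set. The key is the defining property that any two transmitters $a,a'\in\mathcal{D}$ satisfy $a'\notin\mathcal{N}^1(a)\cup\mathcal{N}^2(a)$. I would argue that this two-hop separation forces the sets of links affected by $a$ and by $a'$ to be disjoint---if some link $(xy)$ had its receiver $y$ in both $\mathcal{N}^1(a)$ and $\mathcal{N}^1(a')$, then $a'$ would be a two-hop neighbor of $a$, a contradiction---and, moreover, that the partial-interference-plus-noise of any link affected by $a$ is independent of $a'$'s power. Hence the joint transition kernel factorizes into a product of single-link kernels over $\mathcal{D}$, the density $\pi_{K_t}$ splits compatibly across the disjoint neighborhoods, and detailed balance for the joint move reduces to the product of the already-verified single-link relations. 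Finally I would note irreducibility and aperiodicity on $\mathcal{P}$ (every configuration is reachable through zero-power intermediate states, which carry positive sampling density), so that $\pi_{K_t}$ is the unique stationary density.
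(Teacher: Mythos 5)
Your proposal is correct and follows essentially the same route as the paper: the paper likewise verifies that the two-stage draw in Steps (4)--(5) realizes the conditional density $\propto e^{(\tilde V_{ab}(p)-\epsilon p)/K_t}$ (its Lemma~\ref{lem: transition}), shows the decision-set and $1/d_a$ link-selection probabilities are symmetric between $\mathbf p$ and $\hat{\mathbf p}$ (Lemma~\ref{lem: reachablity}), proves detailed balance conditioned on the link decision set by factorizing the kernel and the weight difference over the two-hop-separated neighborhoods (Lemma~\ref{lem: conditionalreversible}), and then sums over decision sets. All the key ingredients you identify---localization of the weight change, state-independence of the decision set, disjointness of affected neighborhoods, and irreducibility via the zero-power configuration---are exactly the ones the paper uses.
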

\begin{proof}
The proof is presented Appendix \ref{sec: stationary}.
\end{proof}
In our algorithm, the virtual powers are updated using an annealed Gibbs sampler. Assuming the queues are fixed, the following theorem states that, with fixed queue length, the power configurations converge to the optimal solution to (\ref{eq: obj}) as $t$ goes to infinity.
Let $U(\tilde{\mathbf p}) = \tilde V(\tilde{\mathbf p}) - \epsilon \sum_{(ab)\in \mathcal E} \tilde{p}_{ab},$ and
\begin{align*}
U^* &= \max_{\tilde{\mathbf p}\in \mathcal P}U(\tilde{\mathbf p}), U_* = \min_{\tilde{\mathbf p}\in \mathcal P} U(\tilde{\mathbf p}), \Delta = U^* - U_*
\end{align*}
\begin{thm}
Let $K_0 = 2n \Delta.$ Assume $q_{ab}$ are fixed and $\mathcal P_{\delta}^*$ is the set of power configurations such that for any $\tilde{\mathbf p}\in  \mathcal P_{\delta}^*,$
\begin{eqnarray*}
&\sum_{(ab) \in {\cal E}}{q_{ab}\tilde{r}_{ab}(\tilde{\mathbf p})} -\epsilon \sum_{(ab) \in {\cal E}} \tilde{p}_{ab}\\
&\geq (1-\delta) \max_{\tilde{\mathbf p}} \sum_{(ab) \in {\cal E}}{q_{ab}\tilde{r}_{ab}(\tilde{\mathbf p})}-\epsilon \sum_{(ab) \in {\cal E}} \tilde{p}_{ab}.\end{eqnarray*}  Then given any $\delta>0,$ $\varepsilon>0,$ and starting from any initial power configuration, $\tilde{\mathbf p}_0 \in \mathcal P,$ there is an $N\in \mathbb N$ such that if
\begin{equation}
    K_t = \left\{\begin{array}{c c}
                  \frac{K_0}{\log(2+t)} & 0<t<N, \\
                  \frac{K_0}{\log(2+N)} & N\leq t
                \end{array}   \right.
\end{equation}
we have
$$\lim_{t\rightarrow \infty} \int_{\tilde{\mathbf p}_{\delta}^*} P(t, \tilde{\mathbf p} | 0, \mathbf p_0) d\tilde{\mathbf p} = 1 - \varepsilon.$$ \label{thm: Optimal}
\end{thm}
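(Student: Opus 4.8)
The plan is to reduce the whole statement to two facts: a concentration property of the Gibbs density $\pi_K$ as $K\to 0$, and ordinary ergodicity of the fixed-temperature chain. The key structural observation is that the prescribed schedule freezes the temperature at $K_N=K_0/\log(2+N)$ for every $t\geq N$, so for $t\geq N$ the power process is a time-homogeneous Markov chain with kernel $P_{K_N}$ and, by Lemma~\ref{lem: Stationary}, stationary density $\pi_{K_N}$. Writing the law at time $t\geq N$ as the law at time $N$ pushed through $t-N$ applications of $P_{K_N}$, the $t\to\infty$ limit is therefore governed entirely by $\pi_{K_N}$, and I expect the limit in the theorem to equal $\pi_{K_N}(\mathcal P_{\delta}^*)$, independent of both the initial state $\mathbf p_0$ and the annealing done on $[0,N)$.

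First I would prove the concentration lemma $\lim_{K\to 0}\pi_K(\mathcal P_{\delta}^*)=1$. Because each virtual rate $\tilde r_{ab}$ is a step function with finitely many values that changes only across the finitely many critical-power hyperplanes, $U(\tilde{\mathbf p})=\tilde V(\tilde{\mathbf p})-\epsilon\sum_{(ab)}\tilde p_{ab}$ is affine on each of finitely many polytopes tiling $\mathcal P$. On the polytope containing a maximizer $U$ is continuous and attains values arbitrarily close to $U^*$, so for small $\zeta>0$ the set $\{\,U>U^*-\zeta\,\}$ is relatively open and has positive Lebesgue measure $m'>0$; in particular, once $\zeta<\delta U^*$ this set lies inside $\mathcal P_{\delta}^*$. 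Using $\pi_K(\mathcal P_{\delta}^*)=\int_{\mathcal P_{\delta}^*}e^{U/K}\,d\tilde{\mathbf p}\big/\int_{\mathcal P}e^{U/K}\,d\tilde{\mathbf p}$, I would lower bound the numerator by $m'e^{(U^*-\zeta)/K}$ and upper bound the complement contribution by $|\mathcal P|\,e^{(1-\delta)U^*/K}$, so that the complement-to-numerator ratio is at most $(|\mathcal P|/m')\,e^{(\zeta-\delta U^*)/K}\to 0$. This gives $\pi_K(\mathcal P_{\delta}^*)\to 1$ (using $U^*>0$, i.e.\ nonzero queues).

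Second, for any fixed $K$ the Gibbs sampler of Lemma~\ref{lem: Stationary} is irreducible and aperiodic: the power-selection density in (\ref{eq: algdensity}) is strictly positive throughout $[0,p_a^{\rm max}]$ and $U$ is bounded on the compact set $\mathcal P$, so after enough steps to let every link update, the multistep kernel is bounded below by a positive multiple of Lebesgue measure. This minorization yields uniform ergodicity, hence $\|P(t,\cdot\,|\,0,\mathbf p_0)-\pi_{K_N}\|_{TV}\to 0$ and therefore $\int_{\mathcal P_{\delta}^*}P(t,\tilde{\mathbf p}\,|\,0,\mathbf p_0)\,d\tilde{\mathbf p}\to\pi_{K_N}(\mathcal P_{\delta}^*)$. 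Finally, since $K\mapsto\pi_K(\mathcal P_{\delta}^*)$ is continuous and tends to $1$ as $K\to 0$, I would choose $N$ so that $K_N=K_0/\log(2+N)$ makes $\pi_{K_N}(\mathcal P_{\delta}^*)$ hit the target value $1-\varepsilon$, which is exactly the claim.

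The main obstacle I anticipate is the concentration lemma in the presence of the rate discontinuities: one must check that the near-optimal super-level set is genuinely full-dimensional even though the exact maximizer lies on the low-power boundary of its rate polytope (the penalty $-\epsilon\sum\tilde p_{ab}$ pushes the optimum to a corner), and one must control the gap $\zeta<\delta U^*$ uniformly over the finitely many affine pieces. The ergodicity is routine given the explicit positive density in (\ref{eq: algdensity}). I would also note that the specific constant $K_0=2n\Delta$ and the logarithmic shape of the schedule serve only to place the cooling phase in the classical annealing regime of \cite{GemGem_90} so that the chain tracks $\pi_{K_t}$ on $[0,N)$; because the limit is determined solely by the frozen temperature $K_N$, these quantities do not affect the limiting value.
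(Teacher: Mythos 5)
Your proposal is correct, but it takes a genuinely different and in fact shorter route than the paper. The paper follows the Geman--Geman template for inhomogeneous chains: it establishes a per-$2\tau$-step Doeblin-type minorization $\delta(t)=c_2e^{-2n\Delta/K_{t+2\tau}}$ (Lemmas \ref{lem: decisionSet}, \ref{lem: stochasticupdate}), uses the divergence of $\sum_i\delta(t_0+2i\tau)$ --- which is exactly where $K_0=2n\Delta$ and the logarithmic schedule enter --- to prove loss of memory of the initial condition through the cooling phase (Lemma \ref{lem: densityconvergence}), and separately shows the chain tracks the slowly varying targets $\pi_{K_t}$ via $\sum_t\|\pi_{K_t}-\pi_{K_{t+1}}\|<\infty$ (Lemma \ref{lem: stationaryconvergence}), before combining everything with the concentration of $\pi_{K_{t^*}}$ on $\mathcal P_\delta^*$. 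You instead observe that the schedule is frozen at $K_N$ for $t\geq N$, so the $t\to\infty$ limit is governed entirely by the homogeneous kernel $P_{K_N}$; a single fixed-temperature minorization (which is the paper's Lemma \ref{lem: stochasticupdate} with $\delta$ constant) then gives uniform ergodicity and the limit $\pi_{K_N}(\mathcal P_\delta^*)$ regardless of what happened on $[0,N)$, and the concentration lemma picks $N$. This is a legitimate simplification of the theorem as stated, and your remark that $K_0=2n\Delta$ and the log schedule do not affect the limiting value is accurate --- the paper's heavier machinery would only be essential if the temperature were cooled indefinitely, and it does yield the finite-time tracking information that your argument discards. Two points to tighten: (i) your minorization step should explicitly use the fact that a transmitter with several outgoing links needs \emph{two} visits to the decision set within the minorization window (one to zero the currently active link, one to power the new one), since the link-selection rule forbids switching links in a single update --- this is precisely why the paper works with the event $F(t)$ of two selections per node; and (ii) in the concentration lemma, the positive-measure claim for $\{U>U^*-\zeta\}$ needs the care you already flag, because $\tilde V$ is only piecewise constant on half-open cells and the maximizer sits on a cell boundary; restricting to a full-dimensional cell whose closure attains $\sup U$ and cutting with $\sum_{(ab)}\tilde p_{ab}<\zeta/\epsilon$ above the minimizing corner resolves it.
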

\begin{proof}
The proof is presented in Appendix \ref{sec: proofOptimal}.
The proof of the theorem follows the idea in \cite{GemGem_90}. However, in our algorithm, the decision set is randomly generated instead of predetermined, and the Markov chain has a continuous state space instead of a discrete state space, so the convergence of the annealed Gibbs sampling is not guaranteed. The proof therefore is a nontrivial extension.
\end{proof}

{\em Remark 1:} The theorem requires that the queue lengths are fixed during the annealing, which is the reason the algorithm uses the queue lengths at the beginning of a super time slot for the entire super time slot.

{\em Remark 2:} In the algorithm, we replace the interference from non-neighbor nodes with upper bound $\xi.$ Therefore, when node $a$ changes its transmit power to node $b$ to $\tilde{p}_{ab}(T)$ at the end of a super time slot, the actual rate $r_{ab}$ is at least $\tilde{r}_{ab}(T),$ because the actual interference is smaller than that in the virtual rate computation. Further when the neighborhood is chosen to be large enough, i.e., $\xi$ is small, the optimal configuration based on virtual rate is close to the optimal configuration with global interference. But a large neighborhood increases both the computation and communication complexities.

\section{Simulations}
In this section, we use simulations to evaluate the performance of the proposed algorithm, which is SINR-based, with CSMA-based algorithm and Q-CSMA\cite{NiTanSri_10}. {\color{black}The CSMA-based algorithm used in the simulation is an approximation of the traditional CSMA/CA with RTS/CTS algorithm. It is implemented as the following. In each time slot, one link is uniformly randomly selected to transmit. Then the links whose receiver is in the carrier sensing range of the selected transmitter are marked and cannot transmit in the time slot. Then another link in the rest of the links is uniformly randomly select to transmit. Repeat this procedure until there is no more link to select. Thought there is no RTS/CTS in the implementation, this algorithm capture the essence of the CSMA algorithm and has similar performance.} In the simulations, we assume the channel attenuation over a distance $l$ is $l^{-3.5},$ where the path loss exponent is chosen to be $3.5.$ All channels are assumed to be AWGN channels. The transmit power can be continuously adjusted from $[0, 100].$ The rate options for each link are $6,$ $9,$ $12,$ $18,$ $24,$ $36,$ $48,$ and $54$ Mbps, which are the eight rate options available in 802.11g\cite{MhaPapBac_07}. The system is time-slotted, and each time slot is 1 ms. We assume each packet is of size $1,500$ bytes, i.e., 12 Kbits.  So when the link rate is $54$ Mbps, $4.5$ packets can be transmitted in one time slot. {\color{black} Each super time slot consists of $T=50$ time slots. $\alpha$ is equal to $100^{-3.5},$ which is the threshold of the channel gain between two neighboring nodes.}

\subsection{A Ring Network}
Consider a ring network consisting of $9$ directed links, as shown in Fig \ref{fig: ringnetwork}. Each node in the network has one transceiver. The length of each link is $20$ meters. We assume the carrier sensing range is $40$ meters, which is slightly larger than then distance between two nodes that are two-hop away. 

\begin{figure}[!hbt]
\begin{center}
  \includegraphics[width=3.5in]{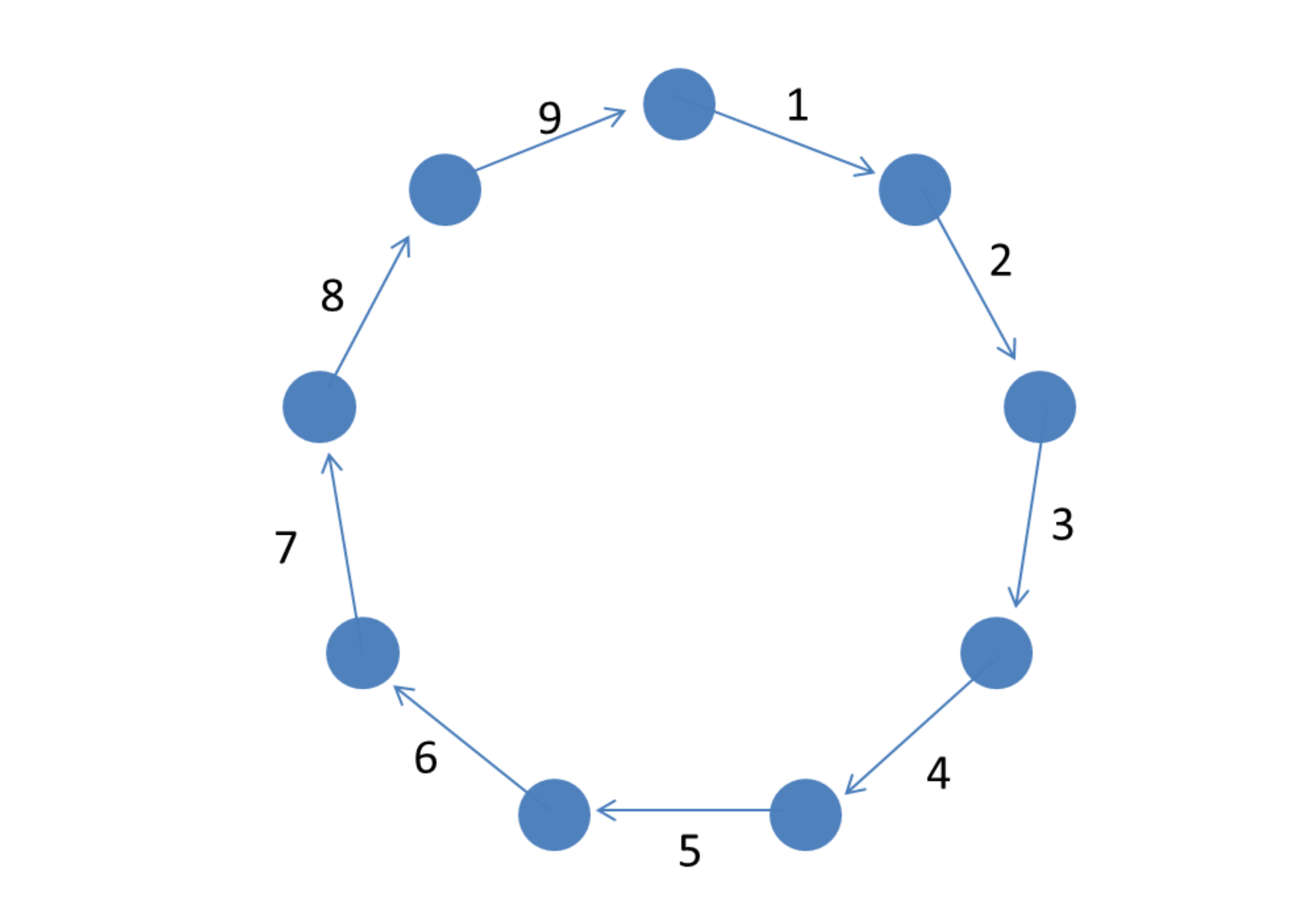}\\
  \end{center}\caption{A ring network containing 9 links}\label{fig: ringnetwork}
\end{figure}
The arrival process is the same as the one described in \cite{NiTanSri_10}. Namely, at time slot $t$, one packet arrives at the transmitters of links $(t \mod 9)$ and $((t+4) \mod 9);$ additionally, with probability $\rho,$ one packets arrives at each transmitter. Hence, the overall arrival rate is $2+9\rho$ packets per time slot. In the simulation, we varied $\rho$ from $0.10$ to $0.26,$ which corresponds to varying the overall arrival rate from $2.9$ to $4.34$ (packets/time slot).

For each value of $\rho,$ we run each simulation for $10^5$ time slots. Figure \ref{fig: simRingNet} shows the mean of the sum queue length in the network. It shows that the sum queue length grows unbounded under the CSMA-based algorithm when $\rho\geq 0.11$ (i.e., overall arrival rate $\geq 2.99$). In other words, the network is unstable under CSMA algorithm for $\rho \geq 0.11.$ On the other hand, our algorithm stabilizes the network for any $\rho\leq 0.25,$ with a corresponding overall arrival rate equal to $4.25.$ Hence, our algorithm increases the throughput by $47\%$ comparing to the CSMA-based algorithm. We can also see that, Q-CSMA, which is throughput optimal under the collision interference model, has similar throughput as the CSMA (around 3). The implementation details of Q-CSMA can be found in \cite{NiTanSri_10}. 

\begin{figure}[!hbt]
\begin{center}
  \includegraphics[width=3.5in]{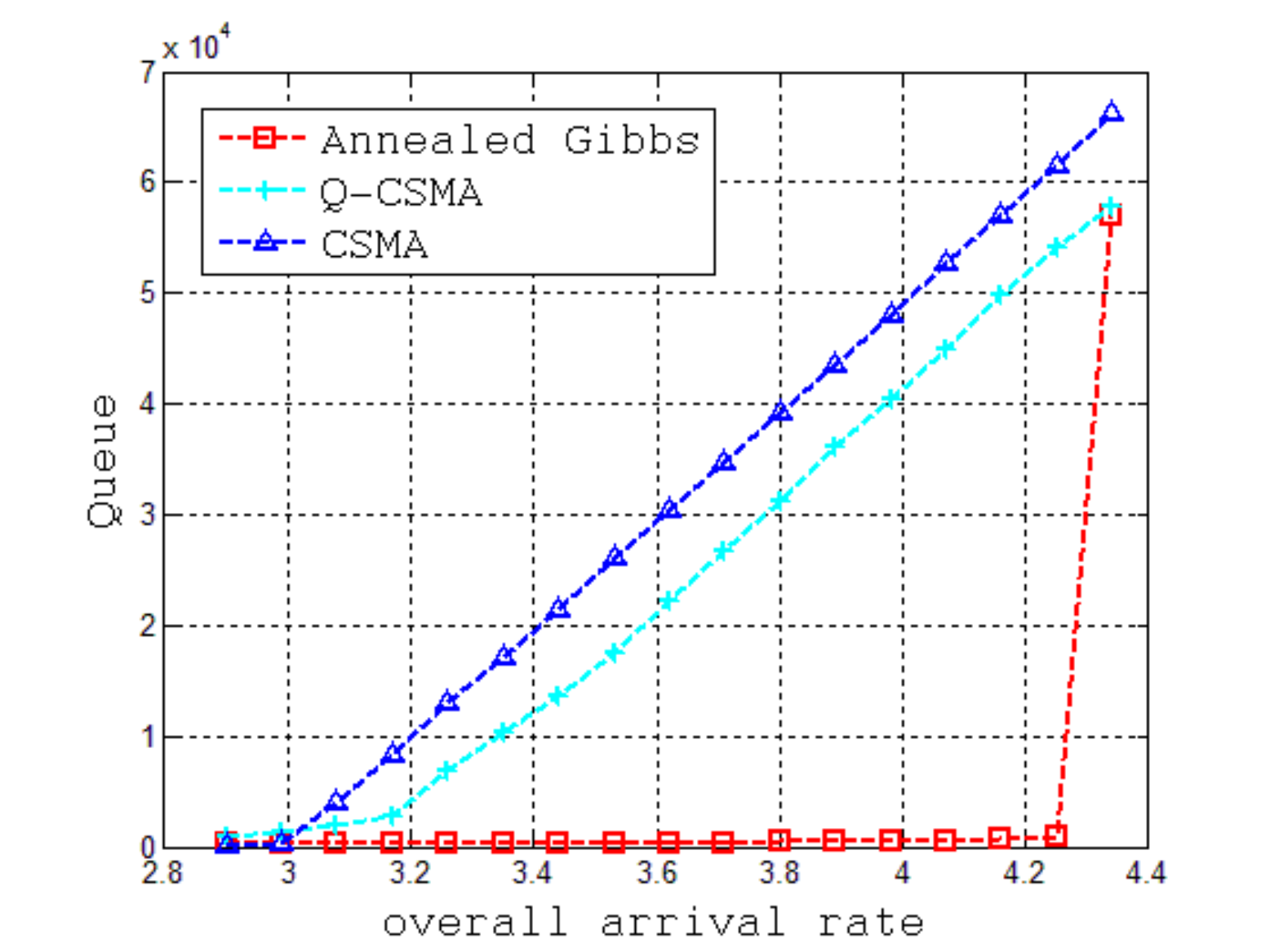}\\
  \end{center}\caption{Average queue length in the ring network}\label{fig: simRingNet}
\end{figure}

\subsection{A Random Network}
In this simulation, we randomly place $200$ links, each with length $20$ meters, in a $1000\times1000$ $\hbox{meter}^2$ two-dimensional torus. The carrier sensing range is set to be $200$ meters, which corresponds to a sensing threshold of $-91$ dBm. We assume Poisson arrivals for each link, and the arrival rate is the same for all links.

\begin{figure}[!hbt]
\begin{center}
  \includegraphics[width=3.5in]{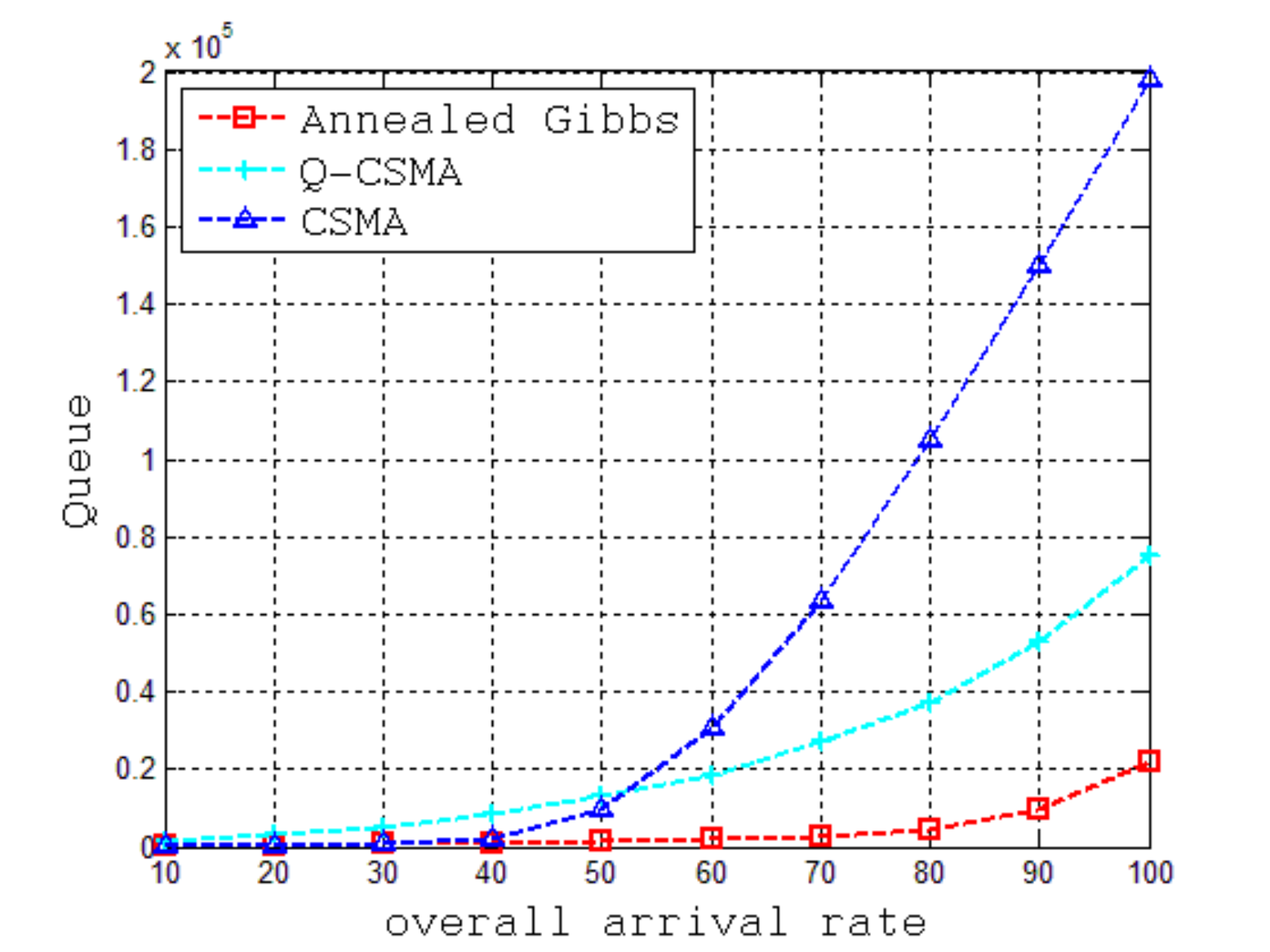}\\
  \end{center}\caption{Average queue length in the random network}\label{fig: simRandNet}
\end{figure}

For each arrival rate, the simulations is run for $10^4$ time slots. Fig \ref{fig: simRandNet} illustrates the time average value of the total queue length in the network under arrival rates and different rate-control algorithms. We observe that the supportable throughput is $70$ packets/slot under the proposed algorithm, $40$ packets/slot under the CSMA algorithm. Our algorithm therefore achieves a $75\%$ throughput gain. Comparing to Q-CSMA, we can see that our algorithm has a much smaller queue length and hence has a much smaller delay.

From these simulation results, we observed that our algorithm significantly outperforms the CSMA-based algorithm and Q-CSMA algorithm, which confirms the importance of adapting transmit powers and coding-modulation schemes in wireless network to increase the network throughput.

\section{Discussion and Conclusion}
{\color{black} We remark that there are several important parameters that should be carefully tuned according to the network configuration to optimize the performance.

First, there is a trade-off between complexity and performance in selecting $\alpha.$ If $\alpha$ is equal to zero, all nodes are one-hop neighbors of each other, then the optimal virtual power configuration obtained by our algorithm $\tilde{\mathbf p}$ is the same as optimal power configuration ${\mathbf p}.$ However, the number of one-hop neighbors of each node should be bounded so that the signaling overhead is acceptable in practice. To bound the number of one-hop neighbors, $\alpha$ should not be too small. On the other hand, a small $\alpha$ is preferred to keep the virtual power configuration to be close to the real power configuration. Therefore, $\alpha$ should be carefully determined based on the real network configuration.

Furthermore, the Markov chain converges to the stationary distribution and the probability of being in the optimal power configuration converges to one only when $T$ goes to infinity. In practice, we cannot choose $T$ to be infinity or too large because the algorithm will response very slowly to queue change and will lead to very bad delay performance. So $T$ also should be carefully chosen in practice. Selecting these parameters to optimize the network performance is an important issue of implementing the proposed algorithm in practice. The problem, however, is complicated and requires further investigation, so left as future topics of our research.}

In summary, we developed a distributed power control and coding-modulation adaptation algorithm using annealed Gibbs sampling, which achieves throughput optimality in an arbitrary network topology. The power update policy emulates a Gibbs sampler over a Markov chain with a continuous state space. Simulated annealing is exploited in the algorithm to speed up the convergence of the algorithm to the optimal power and coding-modulation configuration. Simulation results demonstrated the superior performance of the proposed algorithm.

\appendices
\section{Proof of Lemma \ref{lem: Stationary}}\label{sec: stationary}
We begin the proof with the following lemma, which states that our algorithm simulates a time homogeneous Markov chain.
\begin{lemma}\label{lem: timeHomoMC}
For fixed temperature and queue lengths, the power configurations generated by the power control algorithm form a homogeneous Markov chain with state space ${\mathcal P}.$
\end{lemma}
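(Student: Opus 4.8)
The plan is to establish the two defining properties of a time-homogeneous Markov chain directly from the algorithm's description: first, that the conditional law of $\mathbf{p}(t)$ given the entire history $\mathbf{p}(0),\dots,\mathbf{p}(t-1)$ depends only on $\mathbf{p}(t-1)$; and second, that with the temperature $K_t$ and the queue lengths $q_{xy}$ held fixed, this conditional law does not depend on $t$. Since the claimed state space is $\mathcal{P}$, I would additionally verify that one update step keeps every configuration feasible.

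First I would decompose one step of the algorithm into its three sources of randomness: the generation of the decision set $\mathcal{D}$ in step~(1), the link selection in step~(2), and the power-level sampling in steps~(3) through (5). The decision set is produced purely from the backoff timers $T_a$, drawn independently and uniformly from $[0,W-1]$ afresh in every time slot, together with the fixed neighborhood structure of the graph; consequently the distribution of $\mathcal{D}$ is a fixed probability measure that depends neither on the current configuration nor on the past, and in particular not on $t$. The link-selection rule for each $a\in\mathcal{D}$ and the sampling laws in (\ref{eq: algprob}) and (\ref{eq: algdensity}) are explicit functions of the current configuration alone: the critical powers $\tilde{p}_{c,m}(ab,xy)$, the virtual rates $\tilde{r}_{xy}$, and the virtual local weights $\tilde{V}_{ab}$ are all computed from $\mathbf{p}(t-1)$ and the fixed queue lengths. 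Hence, conditioned on $\mathbf{p}(t-1)$, every ingredient of the update is independent of $\mathbf{p}(0),\dots,\mathbf{p}(t-2)$, which gives the Markov property; and since none of these ingredients involves $t$ once $K_t$ and the queues are frozen, the transition kernel is the same at every slot, giving homogeneity.

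The key structural point that makes the distributed, parallel update well defined is the defining property of the decision set: any two transmitters $a,x\in\mathcal{D}$ satisfy $x\notin\mathcal{N}^1(a)\cup\mathcal{N}^2(a)$. I would use this to argue that the collections of links whose virtual rates are altered by distinct members of $\mathcal{D}$ are pairwise disjoint, since a change in $a$'s power affects only links with a receiver in $\mathcal{N}^1(a)$, and the two-hop separation guarantees no such link is simultaneously affected by another decision-set member. Therefore the updates performed by the nodes of $\mathcal{D}$ neither share input information nor overwrite one another, and the one-step kernel factorizes into a product, over $a\in\mathcal{D}$, of single-node conditional updates in which the powers of all links outside $a$'s own update are held fixed. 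Writing the kernel as the average over $\mathcal{D}$ of this product makes the Markov and homogeneity claims transparent. To close the state-space claim, I would note that each sampled power is drawn from a density supported on $[0,p_a^{\rm max}]$ and that the critical powers are capped at $p^{\rm max}$, so feasibility in $\mathcal{P}$ is preserved at every step.

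The main obstacle I expect is precisely this decoupling of the parallel updates. Unlike the classical Gibbs sampler, where a single coordinate is resampled and the Markov property is immediate, here several nodes update at once using locally gathered information, and one must verify that the information each node uses, namely its neighbors' virtual powers and partial-interferences, is not simultaneously being modified by another updating node. The two-hop separation enforced by the decision set is exactly what rules this out, and turning that combinatorial fact into a clean factorization of the transition kernel, so that the whole step is a genuine single Markov transition rather than a schedule-dependent composition, is the crux of the argument.
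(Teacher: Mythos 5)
Your proposal is correct and follows essentially the same route as the paper's own (much terser) proof: both argue directly from the algorithm's structure that the decision set is generated independently of the configuration, that the link decision set and the sampling densities in (\ref{eq: algprob})--(\ref{eq: algdensity}) depend only on the current configuration $\mathbf{p}(t-1)$ and not on earlier history or on $t$ once $K_t$ and the queues are frozen. Your added observations---the product factorization of the kernel over the decision set enabled by the two-hop separation, and the preservation of feasibility in $\mathcal{P}$---are correct and are in fact the ingredients the paper defers to the proof of Lemma~\ref{lem: conditionalreversible}.
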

\begin{proof}
Let ${\mathbf p}$ denote the current power configuration, and $\mathbf{\hat{p}}$ denote the power configuration generated by the algorithm.

First, by observing the procedure of the generation of the decision set, it is clear that the decision set ${\mathcal D}$ is independent of the power configuration ${\mathbf p}.$ Moreover, in the link selection stage, the links are selected based on ${\mathcal D}$ and ${\mathbf p}.$ Thus, ${\mathcal D}_l$ depends on ${\mathbf p}$ only. Second, for each link $(ab) \in {\mathcal D}_l,$ the new power is sampled from the density function which is determined by ${\mathbf p}$ and ${\mathbf{\hat{p}}},$ while is independent of the earlier power configurations than ${\mathbf p}.$ The claim then follows.
\end{proof}

Knowing that our algorithm simulates a homogeneous Markov chain, the following lemma gives us the transition kernel density.
\begin{lemma}\label{lem: transition}
Suppose link $(ab)$ is selected to update its transmit power, then its power is randomly selected according to the following density:
$$g(p_{ab} | (ab)\in \mathcal D_l, p_{xy}: y\in \mathcal N^1(a)) = \frac{1}{Z_{ab}(K_{t})}e^{\frac{\tilde{V}_{ab}(p_{ab}) - \epsilon p_{ab}}{K_{t}}},$$
where $$Z_{ab}(K_{t}) = \int_0^{p_a^{\rm max}}e^{\frac{\tilde{V}_{ab}(p) - \epsilon p}{K_{t}}} dp$$ is a normalization constant independent of $p_{ab}.$
\end{lemma}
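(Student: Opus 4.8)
The plan is to compose directly the two-stage sampling procedure in the power-level selection step (step (5)) and show that it collapses to the claimed exponential-family density. Since node $a$ first draws a critical-power interval according to (\ref{eq: algprob}) and then draws a power level inside that interval according to (\ref{eq: algdensity}), the chain rule gives the overall density $g(p_{ab})$ as the product of these two expressions. So the whole proof is really a single composition-and-cancellation computation together with one structural observation about $\tilde V_{ab}$.

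First I would multiply the two expressions for an arbitrary $p \in [\tilde p_{c,k}, \tilde p_{c,k+1})$. The factor $\left(e^{-\epsilon \tilde p_{c,k}/K_t} - e^{-\epsilon \tilde p_{c,k+1}/K_t}\right)$ sits in the numerator of (\ref{eq: algprob}) and as a reciprocal in (\ref{eq: algdensity}), so it cancels, leaving
$$g(p) = \frac{\epsilon}{K_t Z_{ab}}\, e^{\frac{\tilde V_{ab}(\tilde p_{c,k}) - \epsilon p}{K_t}}, \qquad p \in [\tilde p_{c,k}, \tilde p_{c,k+1}).$$

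The key step is to use the fact that $\tilde V_{ab}$ is piecewise constant on the critical-power partition: between two consecutive critical powers the coding-modulation scheme of every affected neighboring link is fixed (this is exactly how the critical powers and the vectors $\mathbf m^i$ are constructed in step (4)), hence the virtual rates $\tilde r_{xy}$, and therefore $\tilde V_{ab}$, do not vary on $[\tilde p_{c,k}, \tilde p_{c,k+1})$. Consequently $\tilde V_{ab}(\tilde p_{c,k}) = \tilde V_{ab}(p)$ throughout that interval, and I may replace the constant exponent by the running value, obtaining $g(p) = \frac{\epsilon}{K_t Z_{ab}}\, e^{(\tilde V_{ab}(p) - \epsilon p)/K_t}$ uniformly across all intervals.

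Finally I would identify the normalization. Splitting $\int_0^{p_a^{\rm max}} e^{(\tilde V_{ab}(p) - \epsilon p)/K_t}\,dp$ over the critical-power intervals, the piecewise-constant $\tilde V_{ab}$ pulls out of each piece and the elementary integral $\int_{\tilde p_{c,k}}^{\tilde p_{c,k+1}} e^{-\epsilon p/K_t}\,dp = \frac{K_t}{\epsilon}\left(e^{-\epsilon \tilde p_{c,k}/K_t} - e^{-\epsilon \tilde p_{c,k+1}/K_t}\right)$ reproduces, up to the factor $K_t/\epsilon$, precisely the discrete sum defining $Z_{ab}$. This yields $Z_{ab}(K_t) = \frac{K_t}{\epsilon} Z_{ab}$, so the prefactor $\epsilon/(K_t Z_{ab})$ becomes $1/Z_{ab}(K_t)$ and the density matches the claim. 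I do not expect a genuine obstacle; the only point requiring care is the piecewise-constant observation for $\tilde V_{ab}$, since it is exactly what makes the discrete two-stage rule coincide with sampling from a continuous exponential density and what licenses swapping the sum for the integral in the normalizer.
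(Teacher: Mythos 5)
Your proposal is correct and follows essentially the same route as the paper's proof: compose the interval-selection probability (\ref{eq: algprob}) with the within-interval density (\ref{eq: algdensity}), cancel the shared interval factor, invoke the piecewise-constancy of $\tilde V_{ab}$ on the critical-power partition, and verify $\int_0^{p_a^{\rm max}}e^{(\tilde V_{ab}(p)-\epsilon p)/K_t}\,dp = \frac{K_t}{\epsilon}Z_{ab}$ by integrating interval by interval. No gaps; the one point you flag as needing care (the piecewise-constant observation) is exactly the point the paper's proof also isolates.
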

\begin{proof}
The proof is presented in Appendix \ref{sec: proofLemTransition}.
\end{proof}
It can be easily verified that all the power configurations communicate with the zero power configuration, in which the transmit power of all transmitters are zero. Also,the zero power configuration has a self-loop, which indicates the Markov chain is irreducible and aperiodic and thus ergodic. In the following two lemmas, we will show that conditioning on any link decision set, the detailed balance equations holds, which leads to the conclusion of the lemma.
\begin{lemma}\label{lem: reachablity}
Let $$P(\mathbf p, \mathbf{\hat{p}}) = P(\mathbf p(t+1) = \mathbf{\hat{p}} | \mathbf p(t)= \mathbf p)$$ be the transition kernel probability density, and $$P(\mathbf p, \mathbf{\hat{p}} | \mathcal D_l) = P(\mathbf p(t+1) = \mathbf{\hat{p}} | \mathcal D_l, \mathbf p(t) = \mathbf p).$$
For two power configurations ${\mathbf p}, \mathbf{\hat{p}} \in {\mathcal P},$ if
$$\Pr(\mathcal D_l| \mathbf p(t) = \mathbf p) > 0, \textrm{ and } P(\mathbf p, \mathbf{\hat{p}}| \mathcal D_l) >0,$$
then,
$$\Pr(\mathcal D_l| \mathbf p(t) = \mathbf{\hat{p}}) = \Pr(\mathcal D_l| \mathbf p(t) = \mathbf p), \textrm{ and }P(\mathbf{\hat{p}}, \mathbf p | \mathcal D_l) >0.$$

In other words, if $\mathbf{\hat{p}}$ is reachable from $\mathbf p$ in one transition with a link decision set $\mathcal D_l,$ then $\mathbf p$ is reachable from $\mathbf{\hat{p}}$ in one transition with the same link decision set.
\end{lemma}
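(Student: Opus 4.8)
The plan is to decompose the one-step transition into its three independent random mechanisms — the generation of the transmitter decision set $\mathcal D$, the per-transmitter link selection that refines $\mathcal D$ into the link decision set $\mathcal D_l$, and the Gibbs power resampling on the selected links — and then to track how each mechanism responds when the starting configuration $\mathbf p$ is replaced by $\mathbf{\hat p}$. Throughout I would use the fact, already recorded in the proof of Lemma \ref{lem: timeHomoMC}, that the contention procedure producing $\mathcal D$ reads only backoff timers and the fixed neighborhood topology, so $\Pr(\mathcal D)$ carries no dependence on the power configuration whatsoever.

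First I would establish the equality $\Pr(\mathcal D_l \mid \mathbf p(t)=\mathbf{\hat p}) = \Pr(\mathcal D_l \mid \mathbf p(t)=\mathbf p)$. Writing $\Pr(\mathcal D_l \mid \cdot) = \Pr(\mathcal D)\prod_{a\in\mathcal D}(\text{link-selection factor for } a)$, it suffices to match the per-transmitter factors. The key structural fact I would prove, as an invariant of the algorithm by induction from the all-zero initial configuration (each transmitter updates at most one outgoing link per slot, and in case (i) it only ever touches the already-active link), is that every transmitter has at most one active outgoing link at any time. Under this invariant the link-selection rule assigns probability exactly $1/d_a$ to the selected link whether case (i) or case (ii) is in force, so for each $a$ that contributes a link $(ab)\in\mathcal D_l$ the factor is $1/d_a$ in both configurations. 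For a transmitter $a\in\mathcal D$ that contributes no link, its outgoing powers are identical in $\mathbf p$ and $\mathbf{\hat p}$ (those links are not updated), hence its active/inactive status, and therefore its ``do nothing'' probability $1-1/d_a$, are unchanged. This yields the first claim.

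Next I would argue the reachability claim $P(\mathbf{\hat p},\mathbf p\mid\mathcal D_l)>0$. Here the decisive point is that the defining separation property of the decision set — any two transmitters in $\mathcal D$ are more than two hops apart — localizes each link's resampling density. Since the virtual local weight $\tilde V_{ab}$ and the conditioning powers $\{p_{xy}: y\in\mathcal N^1(a)\}$ entering the kernel of Lemma \ref{lem: transition} involve only transmitters within two hops of $a$, and every such transmitter other than $a$ lies outside $\mathcal D$ and hence keeps its power fixed during the transition, the conditioning data for $(ab)$ is identical under $\mathbf p$ and $\mathbf{\hat p}$. Consequently the kernel $g(\cdot\mid\cdots)$ governing the reverse update of $(ab)$ is the very same function as in the forward direction. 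Because $g$ is a normalized exponential of a bounded exponent over the compact interval $[0,p_a^{\mathrm{max}}]$, it is strictly positive everywhere; thus $P(\mathbf p,\mathbf{\hat p}\mid\mathcal D_l)>0$ forces $g(\hat p_{ab}\mid\cdots)>0$, and the same positivity immediately gives $g(p_{ab}\mid\cdots)>0$ for the reverse move. Taking the product over $(ab)\in\mathcal D_l$ delivers $P(\mathbf{\hat p},\mathbf p\mid\mathcal D_l)>0$.

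I expect the main obstacle to be the first claim rather than the second, since the link-selection step is the only place where the power configuration genuinely influences the probability of $\mathcal D_l$, through the active-versus-inactive distinction between cases (i) and (ii). The real work is to confirm the single-active-link invariant and then to check, across the handful of cases for the selected link — staying positive, being switched off, or being switched on — that the selection probability collapses to $1/d_a$ in every case, so that no asymmetry between $\mathbf p$ and $\mathbf{\hat p}$ survives. Once that bookkeeping is complete, the positivity argument for reachability follows essentially at once from the strict positivity of the Gibbs kernel.
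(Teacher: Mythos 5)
Your proposal is correct and follows essentially the same route as the paper's proof: decision-set generation is power-independent, the link-selection factor reduces to $1/d_a$ in both configurations because all non-selected outgoing links of a deciding transmitter carry zero power in both $\mathbf p$ and $\mathbf{\hat p}$, and strict positivity of the Gibbs kernel on $[0,p_a^{\rm max}]$ gives the reverse transition. Your explicit induction establishing the single-active-outgoing-link invariant and your handling of transmitters in $\mathcal D$ that contribute no link to $\mathcal D_l$ are small points of added care that the paper leaves implicit, but they do not change the argument.
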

\begin{proof}
The proof is presented in Appendix \ref{sec: proofLemReachability}.
\end{proof}

\begin{lemma} \label{lem: conditionalreversible}
For any link in the link decision set $\mathcal D_l,$ if $$\Pr(\mathcal D_l | \mathbf p(t) = \mathbf p)>0,$$ then
$$\pi(\mathbf p) P(\mathbf p, \mathbf{\hat{p}} | \mathcal D_l) = \pi(\mathbf{\hat{p}}) P(\mathbf{\hat{p}}, \mathbf p |\mathcal D_l).$$
\end{lemma}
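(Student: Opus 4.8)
The plan is to establish the conditional detailed balance by exploiting the fact that, conditioned on a link decision set $\mathcal D_l$, the update decouples across the links of $\mathcal D_l$. First I would invoke Lemma \ref{lem: transition}: given $\mathcal D_l$, each selected link $(ab)\in\mathcal D_l$ draws its new power independently from the density $g(\cdot\mid\cdot)\propto e^{(\tilde V_{ab}(p_{ab})-\epsilon p_{ab})/K_t}$, while all coordinates outside $\mathcal D_l$ are held fixed. Hence, for $\hat{\mathbf p}$ that agrees with $\mathbf p$ off $\mathcal D_l$,
$$P(\mathbf p,\hat{\mathbf p}\mid\mathcal D_l)=\prod_{(ab)\in\mathcal D_l}\frac{1}{Z_{ab}(K_t)}e^{\frac{\tilde V_{ab}(\hat p_{ab})-\epsilon\hat p_{ab}}{K_t}},$$
and both sides of the claimed identity vanish as densities when $\hat{\mathbf p}$ and $\mathbf p$ differ anywhere off $\mathcal D_l$, so it suffices to treat the factorized case.

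The crucial structural step is a decoupling argument. By the definition of the decision set, any two transmitters $a,x\in\mathcal D_l$ satisfy $x\notin\mathcal N^1(a)\cup\mathcal N^2(a)$, so (using the symmetry of the one-hop relation) no receiver can be a one-hop neighbor of both $a$ and $x$; consequently the collections of links whose virtual rate is influenced by $p_{ab}$ and by $p_{xy}$ are disjoint. This yields two facts I would record. First, the local weight $\tilde V_{ab}$ depends on $p_{ab}$ and on the powers of transmitters within two hops of $a$ only, none of which (other than $a$ itself) lie in $\mathcal D_l$; therefore $\tilde V_{ab}(\cdot)$ is the same function and its normalizer $Z_{ab}(K_t)$ the same number at $\mathbf p$ and at $\hat{\mathbf p}$, so the forward and backward kernels use identical $Z_{ab}$. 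Second, the global weight splits additively, $\tilde V(\hat{\mathbf p})-\tilde V(\mathbf p)=\sum_{(ab)\in\mathcal D_l}\big(\tilde V_{ab}(\hat p_{ab})-\tilde V_{ab}(p_{ab})\big)$, and the penalty term splits the same way since $\hat p_{ab}=p_{ab}$ off $\mathcal D_l$. Writing $U(\mathbf p)=\tilde V(\mathbf p)-\epsilon\sum_{(ab)}p_{ab}$, this gives
$$U(\hat{\mathbf p})-U(\mathbf p)=\sum_{(ab)\in\mathcal D_l}\Big[\big(\tilde V_{ab}(\hat p_{ab})-\epsilon\hat p_{ab}\big)-\big(\tilde V_{ab}(p_{ab})-\epsilon p_{ab}\big)\Big].$$

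With these in hand the verification is a direct substitution. I would form the ratio $\pi(\mathbf p)P(\mathbf p,\hat{\mathbf p}\mid\mathcal D_l)$ over $\pi(\hat{\mathbf p})P(\hat{\mathbf p},\mathbf p\mid\mathcal D_l)$ using $\pi\propto e^{U/K_t}$ and the factorized kernel. The $Z(K_t)$ of $\pi$ cancels, each $Z_{ab}(K_t)$ cancels between the forward and backward kernels by the first fact above, and the surviving exponent equals $\big(U(\mathbf p)-U(\hat{\mathbf p})\big)/K_t+\sum_{(ab)\in\mathcal D_l}\big[(\tilde V_{ab}(\hat p_{ab})-\epsilon\hat p_{ab})-(\tilde V_{ab}(p_{ab})-\epsilon p_{ab})\big]/K_t$, which is exactly zero by the additive decomposition. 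Hence the ratio is $1$ and the detailed balance holds. Lemma \ref{lem: reachablity} is what guarantees this is not vacuous: whenever the forward density is positive the backward one is too, and the conditioning probabilities on $\mathcal D_l$ agree, so both sides are simultaneously positive.

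I expect the main obstacle to be the decoupling step, specifically justifying that each per-link normalizer $Z_{ab}(K_t)$ is literally unchanged when we move from $\mathbf p$ to $\hat{\mathbf p}$. This is precisely what makes the simultaneous parallel updates reversible rather than merely the single-site Gibbs update, and it rests entirely on the two-hop separation built into the decision set; the algebraic cancellation afterward is routine telescoping.
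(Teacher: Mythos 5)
Your proposal is correct and follows essentially the same route as the paper's proof: decompose $\tilde V$ into the local weights $\tilde V_{ab}$ using the two-hop separation of the decision set (so no link's virtual rate is affected by two updating transmitters), factorize the conditional kernel over $\mathcal D_l$ via Lemma \ref{lem: transition}, observe that each $Z_{ab}(K_t)$ is unchanged between $\mathbf p$ and $\hat{\mathbf p}$, and cancel the exponents. The only cosmetic difference is that you correctly attribute the disjointness of the affected neighborhoods to the definition of the decision set, whereas the paper's text cites Lemma \ref{lem: timeHomoMC} at that point.
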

\begin{proof}
The proof is presented in Appendix \ref{sec: conditionalReversible}.
\end{proof}
Finally, we have
\begin{align*}
  \pi(\mathbf p) P(\mathbf p, \mathbf{\hat{p}}) =& \pi(\mathbf p) \sum_{\mathcal D_l} P(\mathbf p, \mathbf{\hat{p}}|\mathcal D_l) \Pr(\mathcal D_l | \mathbf p)  \\
  =& \pi(\mathbf{\hat{p}}) \sum_{\mathcal D_l} P(\mathbf{\hat{p}}, \mathbf p|\mathcal D_l) \Pr(\mathcal D_l | \mathbf p) \\
  =& \pi(\mathbf{\hat{p}}) \sum_{\mathcal D_l} P(\mathbf{\hat{p}}, \mathbf p|\mathcal D_l) \Pr(\mathcal D_l | \mathbf{\hat{p}}) \\
  =& \pi(\mathbf{\hat{p}}) P(\mathbf{\hat{p}}, \mathbf p),
\end{align*}
where the second equality holds by Lemma \ref{lem: conditionalreversible}, and the third equality holds by Lemma \ref{lem: reachablity}.
We then concludes the lemma.


\section{Proof of Lemma \ref{lem: transition}}\label{sec: proofLemTransition}
\begin{proof}
First, we arrange the critical power levels of $(ab),$ $$\{p_c(ab, xy), \forall (xy) \quad {\textrm s.t. }\quad y\in {\cal N}^1(a)\}$$ in ascending order, denoted by
     $$0=p_{c,0}<p_{c,2}<\ldots <p_{c,m} = p^{\rm max}.$$

Recall that the power level of link $(ab)$ is generated using the following procedure:

Node $a$ first selects a power interval $[p_{c,i}, p_{c, i+1})$ with following probability:
$$\Pr\left(p \in [p_{c,i}, p_{c, i+1})\right)  = \frac{1}{Z_{ab}}\left(e^{-\frac{\epsilon p_{c,i}}{K_{t}}} - e^{-\frac{\epsilon p_{c, i+1}}{K_{t}}}\right) e^{\frac{\tilde V_{ab}(p_{c,i})}{K_{t}}},$$
where $$Z_{ab} = \sum_{i=0}^{m-1}  \left(e^{-\frac{\epsilon p_{c,i}}{K_{t}}} - e^{-\frac{\epsilon p_{c, i+1}}{K_{t}}}\right) e^{\frac{\tilde{V}_{ab}(p_{c,i})}{K_{t}}}.$$
Given the interval $[p_{c,i}, p_{c, i+1})$ is selected, then $a$ randomly select the power level $p_{ab}(t)$ according to the following probability density function(pdf):
$$f_{ab}(p | p\in [p_{c,i}, p_{c, i+1})) = \frac{\epsilon}{K_{t}} e^{-\frac{\epsilon p}{K_{t}}}\left(e^{-\frac{\epsilon p_{c,i}}{K_{t}}} - e^{-\frac{\epsilon p_{c, i+1}}{K_{t}}}\right)^{-1}.$$

For any power level $p\in [0, p_a^{\rm max}],$ let us consider the probability density. it must in an interval $[p_{c,i}, p_{c,i+1})$ for some $i, i\in\{0,1,\ldots, m-1\}.$ Hence, $p$ is selected according to the following density:

\begin{align*}
  f_{ab}(p)  &= f_{ab}( p | p\in [p_{c,i}, p_{c, i+1}))\Pr\left(p \in [p_{c,i}, p_{c, i+1})\right) \\
   &= \frac{1}{Z_{ab}}\frac{\epsilon }{K_{t}} e^{-\frac{\epsilon p}{K_{t}}}  e^{\frac{\tilde{V}_{ab}(p_{c,i})}{K_{t}}}\\
\end{align*}

We then need to show when $p\in [p_{c,i}, p_{c,i+1}),$
$$ \frac{e^{\frac{\tilde{V}_{ab}(p_{ab}) - \epsilon p_{ab}}{K_{t}}}}{\int_0^{p_a^{\rm max}}e^{\frac{\tilde{V}_{ab}(p) - \epsilon p}{K_{t}}} dp} = \frac{1}{Z_{ab}}\frac{\epsilon }{K_{t}} e^{-\frac{\epsilon p_{ab}}{K_{t}}}  e^{\frac{\tilde{V}_{ab}(p_{c,i})}{K_{t}}},$$
which is equivalent to show that
$$\int_0^{p_a^{\rm max}}e^{\frac{\tilde{V}_{ab}(p) - \epsilon p}{K_{t}}} dp = Z_{ab}\frac{K_{t}}{\epsilon }.$$

We simplify the LHS term by computing the integral in each interval $[p_{c,i}, p_{c,i+1})$.
$$\int_{p_{c,i}}^{p_{c,i+1}} e^{\frac{\tilde{V}_{ab}(p) - \epsilon p}{K_{t}}} dp.$$
Notice that, by the definition of critical power level, the modulation of link $(ab)$ and all the links $(xy),\, y\in {\cal N}^1(a),$ will not change if $p_{ab}$ varies between two adjacent critical power levels $p_{c,i}$ and $p_{c,i+1}.$ Thus, $$\tilde{V}_{ab}(p_{ab}) = \sum_{(xy): y\in {\cal N}^1(a)} \tilde r_{xy} q_{xy} = \tilde{V}_{ab}(p_{c,i})$$ is a constant, when $p\in [p_{c,i}, p_{c,i+1}).$
Hence,
\begin{align*}
  &\int_{p_{c,i}}^{p_{c,i+1}} e^{\frac{\tilde{V}_{ab}(p) - \epsilon p}{K_{t}}} dp\\
  =& e^{\frac{\tilde{V}_{ab}(p_{c,i})}{K_{t}}}\int_{p_{c,i}}^{p_{c,i+1}}  e^{\frac{- \epsilon p}{K_{t}}} dp\\
  =& -\frac{K_{t}}{\epsilon} e^{\frac{\tilde{V}_{ab}(p_{c,i})}{K_{t}}}  e^{\frac{- \epsilon p}{K_{t}}}|_{p_{c,i}}^{p_{c,i+1}}\\
  =& \frac{K_{t}}{\epsilon} e^{\frac{\tilde{V}_{ab}(p_{c,i})}{K_{t}}} \left(e^{-\frac{p_{c,i}}{K_{t}}} - e^{-\frac{p_{c,i+1}}{K_{t}}}\right)
\end{align*}

Therefore, we have
\begin{align*}
  &\int_0^{p_a^{\rm max}}e^{\frac{\tilde{V}_{ab}(p) - \epsilon p}{K_{t}}} dp\\
  =& \sum_{i=0}^{m-1} \int_{p_{c,i}}^{p_{c,i+1}} e^{\frac{\tilde{V}_{ab}(p) - \epsilon p}{K_{t}}} dp\\
  =& \sum_{i=0}^{m-1} \frac{K_{t}}{\epsilon} e^{\frac{\tilde{V}_{ab}(p_{c,i})}{K_{t}}} \left(e^{-\frac{p_{c,i}}{K_{t}}} - e^{-\frac{p_{c,i+1}}{K_{t}}}\right)\\
  =& \frac{K_{t}}{\epsilon}Z_{ab}
\end{align*}

The Lemma then follows.
\end{proof}

\section{Proof of Lemma \ref{lem: reachablity}}\label{sec: proofLemReachability}
\begin{proof}
Suppose $\mathcal D_l$ is a link decision set with $\Pr(\mathcal D_l | \mathbf p(t) = \mathbf p) >0.$ A power configuration $\mathbf{\hat p}$ is reachable in one time slot from $\mathbf p,$ i.e., $P(\mathbf p, \mathbf{\hat{p}}| \mathcal D_l) >0.$
Assume that $\mathcal D$ is the decision set corresponding to $D_l,$ we then have
$$\Pr(\mathcal D| \mathbf p(t) = \mathbf p)>0.$$
Remember that the decision set is generated independently of the power configuration, which means
$$\Pr(\mathcal D| \mathbf p(t) = \mathbf p) = \Pr(\mathcal D | \mathbf p(t) = \mathbf{\hat{p}})>0.$$

Now we consider an arbitrary link $(ab)\in \mathcal D_l,$ which implies that $a\in \mathcal D.$ There are two different cases:
\begin{enumerate}
  \item If node $a$ has only one outgoing link, then the event $(ab)\in \mathcal D_l$ and the event $a\in \mathcal D$ are equivalent.  Thus,
      \begin{align*}
        \Pr((ab)\in \mathcal D_l | \mathbf p(t) = \mathbf p) &= \Pr(a\in \mathcal D| \mathbf p(t) = \mathbf p)\\
        &= \Pr(a\in \mathcal D| \mathbf p(t) = \mathbf{\hat p})\\
        &= \Pr((ab)\in \mathcal D_l | \mathbf p(t) = \mathbf{\hat p})\\
      \end{align*}
  \item If node $a$ has more than one outgoing link, since link $(ab)\in D_l$ when the power configuration is $\mathbf p,$ then no other outgoing link, i.e., $(ac)$ can be active in $\mathbf p.$ In other words, $p_{ac} = 0, \forall (ac)\not = (ab), (ac)\in \mathcal E.$ Since $(ac)$ is not in the link decision set, $$\hat{p}_{ac} = p_{ac} = 0.$$ Hence,
      \begin{align*}
        \Pr((ab) \in \mathcal D_l | \mathbf p(t) = \mathbf p) &= \frac{1}{d_a} \Pr(a \in \mathcal D | \mathbf p(t) = \mathbf p)\\
        &= \frac{1}{d_a} \Pr(a \in \mathcal D | \mathbf p(t) = \mathbf{\hat p})\\
        &= \Pr((ab) \in \mathcal D_l | \mathbf p(t) = \mathbf p),
      \end{align*}
      Notice that the equalities hold no matter $p_{ab}>0$ or not.
\end{enumerate}
Another observation is that, given a node $a$ is in $\mathcal D,$ the event $(ab)\in \mathcal D_l$ is independent of the other links $(cd)$ with different transmitter. Thus, we have
\begin{align*}
    &\Pr(\mathcal D_l | \mathbf p(t) = \mathbf p)\\
     =& \Pr(\mathcal D_l, \mathcal D| \mathbf p(t) = \mathbf p)\\
    =& \Pr(\mathcal D_l | \mathcal D, \mathbf p(t) = \mathbf p) \Pr(\mathcal D|\mathbf p(t) = \mathbf p)\\
    =&\prod_{(ab)\in \mathcal D_l} \Pr((ab)\in \mathcal D_l | \mathcal D, \mathbf p(t) = \mathbf p) \Pr(\mathcal D|\mathbf p(t) = \mathbf p)\\
    =& \prod_{(ab)\in \mathcal D_l} \Pr((ab)\in \mathcal D_l | \mathcal D, \mathbf p(t) = \mathbf{\hat p})\Pr(\mathcal D|\mathbf p(t) = \mathbf{\hat p})\\
    =& \Pr(\mathcal D_l | \mathbf p(t) = \mathbf{\hat p})
\end{align*}

Next, we show that
\begin{equation}\label{eq: weakreversible}
    P(\mathbf{\hat{p}}, \mathbf p | \mathcal D_l) >0
\end{equation}
Recall that in the power control algorithm, if $(ab)\in \mathcal D_l,$ then link $(ab)$ randomly generate a power level according to the distribution in equalities (\ref{eq: algprob}) and (\ref{eq: algdensity}), which is always positive for any candidate power level $p\in [0, p_0^{\rm max}].$
Thus, if $(ab)\in \mathcal D_l,$ and link $(ab)$ can change its power from $p_{ab}$ to $\hat{p}_{ab}$ with positive probability, it then can change its power from $\hat{p}_{ab}$ to $p_{ab}.$ This holds for each link in the link decision set. As a result, we have equality (\ref{eq: weakreversible}). The statements in the proposition are then proved.
\end{proof}

\section{Proof of Lemma \ref{lem: conditionalreversible}}\label{sec: conditionalReversible}
\begin{proof}
By the definition of $\pi(\mathbf p):$
$$\pi(\mathbf p) = \frac{1}{Z(K_{t})} e^{(\tilde V(\mathbf p) - \epsilon \sum_{(ab)\in \mathcal E} p_{ab})/K_{t}}$$
we have
$$\frac{\pi(\mathbf p)}{\pi(\mathbf{\hat{p}})} = e^{[(\tilde V(\mathbf p) - \epsilon \sum_{(ab)\in \mathcal E} p_{ab}) - (\tilde V(\mathbf{\hat{p}}) - \epsilon \sum_{(ab)\in \mathcal E} \hat{p}_{ab})]/K_{t}}.$$

We now show how the weight $\tilde V(\mathbf p)$ is divided into local weight according to the link decision set $\mathcal D_l.$ With a little abuse of notation, we define
$$\mathcal N^1(\mathcal D_l) \triangleq \cup_{(ab)\in \mathcal D_l}\{(xy): y\in \mathcal N^1(a)\}.$$

For each link $(xy),$ one of the following two situations must be true.
\begin{enumerate}
  \item $y\in \mathcal N^1(a)$ for some node $a,$ such that $(ab)\in \mathcal D_l;$ while $y\notin \mathcal N^1(c)$ for any other node $c\not=a, (cd)\in \mathcal D_l.$
  \item $y\notin \mathcal N^1(\mathcal D_l).$
\end{enumerate}
The first situation is true, because otherwise $y\in \mathcal N^1(a) \cap \mathcal N^1(c),$ which contradicts Lemma \ref{lem: timeHomoMC}.

Therefore, we have
\begin{align*}
  \tilde V(\mathbf p) =& \sum_{(ab)\in \mathcal E} \tilde r_{ab}(\mathbf p) q_{ab} \\
   =& \sum_{(xy)\in \mathcal N^1(\mathcal D_l)} \tilde r_{xy}(\mathbf p) q_{xy} + \sum_{(xy)\notin \mathcal N^1(\mathcal D_l)} \tilde r_{xy}(\mathbf p) q_{xy}\\
  =& \sum_{(ab)\in \mathcal D_l} \sum_{(xy): y\in \mathcal N^1(a)}\tilde r_{xy}(\mathbf p) q_{xy} + \sum_{(xy)\notin \mathcal N^1(\mathcal D_l)} \tilde r_{xy}(\mathbf p) q_{xy}\\
  =& \sum_{(ab)\in \mathcal D_l} \tilde{V}_{ab}(\mathbf p) +  \sum_{(xy)\notin \mathcal N^1(\mathcal D_l)} \tilde r_{xy}(\mathbf p) q_{xy}
\end{align*}

Similarly,
\begin{equation*}
  \tilde V(\mathbf{\hat{p}}) = \sum_{(ab)\in \mathcal D_l} \tilde{V}_{ab}(\mathbf{\hat{p}}) +  \sum_{(xy)\notin \mathcal N^1(\mathcal D_l)} \tilde r_{xy}(\mathbf{\hat{p}}) q_{xy}
\end{equation*}

Note that if $(xy)\notin \mathcal N^1(\mathcal D_l),$ then the virtual rate of $(xy)$ remains the same when the power is changed from $\mathbf p$ to $\mathbf{\hat{p}},$ because none of $y'$s neighbor changes its transmit power.
Therefore, we obtain
$$\tilde V(\mathbf p) -   \tilde V(\mathbf{\hat{p}}) = \sum_{(ab)\in \mathcal D_l} \tilde{V}_{ab}(\mathbf p) - \sum_{(ab)\in \mathcal D_l} \tilde{V}_{ab}(\mathbf{\hat{p}}).$$

Notice that only the links in $\mathcal D_l$ changes their power level, it is obvious that
$$\sum_{(ab)\in \mathcal E}p_{ab} - \sum_{(ab)\in \mathcal E}\hat{p}_{ab} = \sum_{(ab)\in \mathcal D_l}p_{ab} - \sum_{(ab)\in \mathcal D_l}\hat{p}_{ab}.$$

Hence,
\begin{align}\label{eq: reversible_1}
 \frac{\pi(\mathbf p)}{\pi(\mathbf{\hat{p}})} &= e^{[(\tilde V(\mathbf p) - \epsilon \sum_{(ab)\in \mathcal E} p_{ab}) - (\tilde V(\mathbf{\hat{p}}) - \epsilon \sum_{(ab)\in \mathcal E} \hat{p}_{ab})]/K_{t}} \nonumber \\
 &= e^{\sum_{(ab)\in \mathcal D_l}[(V_{ab}(\mathbf p) - \epsilon p_{ab}) - (V_{ab}(\mathbf{\hat{p}}) - \epsilon \hat{p}_{ab})]/K_{t}}\\\nonumber
  &= \prod_{(ab)\in \mathcal D_l} \frac{e^{[V_{ab}(\mathbf p) - \epsilon p_{ab}]/K_{t}} }{e^{[V_{ab}(\mathbf{\hat{p}}) - \epsilon \hat{p}_{ab}]/K_{t}}}.
\end{align}

Since each link $(ab)\in \mathcal D_l$ in the decision set updates its transmit power level independent of the other links in the decision set, given the transmit power of the links $\{(xy): x\in \mathcal N^1(a) \cup \mathcal N^2(a)\},$ we have
\begin{align*}
  &P(\mathbf p, \mathbf{\hat{p}} | \mathcal D_l)\\
   =& \prod_{(ab)\in \mathcal D_l} g(\hat{p}_{ab}| p_{xy}:x\in \mathcal N^1(a) \cup \mathcal N^2(a), (ab)\in \mathcal D_l) \\
   =& \prod_{(ab)\in \mathcal D_l} \frac{1}{Z_{ab}(K_{t})} e^{[V_{ab}(p_{ab}) - \epsilon p_{ab}]/K_{t}}
\end{align*}
Notice that $Z_{ab}(K_{t})$ depends only on the power level of the links $\{(xy): x\in \mathcal N^1(a) \cup \mathcal N^2(a)\},$ and the power level of these links are the same in $\mathbf p$ as in $\mathbf{\hat{p}}.$ We have
\begin{align*}
\frac{P(\mathbf p, \mathbf{\hat{p}} | \mathcal D_l)}{P(\mathbf{\hat{p}}, \mathbf p | \mathcal D_l)}=& \prod_{(ab)\in \mathcal D_l} \frac{e^{[V_{ab}(\mathbf{\hat{p}}) - \epsilon \hat{p}_{ab}]/K_{t}} }{e^{[V_{ab}(\mathbf p) - \epsilon p_{ab}]/K_{t}}}\\
=& \frac{\pi(\mathbf{\hat{p}})}{\pi(\mathbf p)},
\end{align*}
which yields the result in Lemma \ref{lem: conditionalreversible}.
\end{proof}

\section{Proof of Theorem \ref{thm: Optimal}}\label{sec: proofOptimal}
In Gibbs sampler, the states of the links are updated in sequential and in a deterministic order. This deterministic updating scheme makes sure that a full sweep is finished in $n$ time slots, which establish a lower bound on the transition density between any two power configurations. We then show that under our stochastic updating scheme, a similar lower bound on the transition density can be obtained. Please notice that all the power mentioned in this proof are virtual power, we omit the tilde here for simplicity of notation.

First, let $d_{a,{\rm tx}} = |\{(xy): x\in \mathcal N^1(a) \cup \mathcal N^2(a)\}|,$ which is the number of transmitters who contend with $a$ for being in the decision set. Recall that $d_a$ denotes the outgoing degree of node $a,$ i.e., $d_a = |\{b: (ab)\in {\mathcal E}\}|.$ Further, let
$d_{\cal G} =\max\left\{ \max_{(ab)\in \mathcal E} d_{ab}, \max_{a\in \mathcal V} d_{a,{\rm tx}}\right\}.$

For the simplicity of notations, define $$P(t_2, \mathbf p_2 | t_1, \mathbf p_1) \triangleq P(\mathbf p(t_2) = \mathbf p_2 | \mathbf p(t_1)=\mathbf p_1),$$ and
$$P(t_2, \mathbf p_2 | t_1, \pi) \triangleq \int_{\mathcal P} P(t_2, \mathbf p_2 | t_1, \mathbf p) \pi(\mathbf p) d{\mathbf p}.$$

Let $||\mu - \nu||$ denote the $L^1$ distance between two distributions on $\mathcal P:$
$||\mu - \nu|| = \int_{\mathcal P} |\mu(\mathbf p) - \nu(\mathbf p)| d \mathbf p.$
Note that if $||\mu_n - \mu|| \rightarrow 0,$ then $\mu_n \rightarrow \mu$ as $n\rightarrow \infty,$ except perhaps on a subset of $\mathcal P$ with Lebesque measure 0.

\begin{lemma}\label{lem: decisionSet}
Define $\tau =  \lceil \frac{2 (d_{\cal G}+1) \log n}{c_1}\rceil,$ where $c_1=0.18$ is a constant. Let $F(t)$ denote the event that each transmitter is selected in the decision set at least twice during $[t+1, t+ 2\tau],$ then we have
$$\Pr(F(t)) \geq 1-2n^{-1}, t = 1, 2, 3,\ldots.$$
\end{lemma}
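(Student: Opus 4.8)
The plan is to reduce the statement to a single per-slot selection estimate, which is then amplified over $2\tau$ slots by a binomial tail bound and a union bound over transmitters. For a fixed transmitter $a$, let $m_a$ denote the number of slots in $[t+1, t+2\tau]$ in which $a$ is placed in the decision set, so that $F(t) = \bigcap_a \{m_a \geq 2\}$ and, by a union bound over the (at most $n$) transmitters, $\Pr(F(t)^c) \leq \sum_a \Pr(m_a \leq 1)$. It therefore suffices to show $\Pr(m_a \leq 1) \leq 2n^{-2}$ for each $a$, uniformly in $t$.

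First I would lower-bound the probability $p_a$ that $a$ enters the decision set in a single slot. Inspecting the contention rule (i)--(iv), transmitter $a$ is selected exactly when its backoff $T_a$ is \emph{strictly} smaller than that of every contending transmitter $x \in \mathcal N^1(a)\cup \mathcal N^2(a)$: a contender with a smaller backoff is heard (or produces an audible collision) before slot $T_a+1$ and excludes $a$ by (ii)--(iii), while a contender tied with $a$ collides with $a$'s own INTENT by (iv)(a). Since $a$ has $d_{a,\mathrm{tx}} \leq d_{\mathcal G}$ contenders, all backoffs are i.i.d.\ uniform on $\{0,\dots,W-1\}$, and the chance of holding the strict minimum decreases in the number of contenders,
\[
 p_a \;\geq\; \frac{1}{W}\sum_{i=0}^{W-1}\Big(\tfrac{i}{W}\Big)^{d_{\mathcal G}} \;\geq\; \frac{1}{d_{\mathcal G}+1}-\frac{1}{W} \;\geq\; \frac{c_1}{d_{\mathcal G}+1},
\]
where the middle estimate is the monotone Riemann-sum error bound for $\int_0^1 x^{d_{\mathcal G}}dx$ and the last inequality holds with $c_1 = 0.18$ whenever $W$ is of order $d_{\mathcal G}$ (consistent with the algorithm's ``$W$ large enough'' assumption). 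By Lemma~\ref{lem: timeHomoMC} the decision set is generated independently of the power configuration and with fresh backoffs each slot, so the per-slot selection events are i.i.d.\ and $m_a \sim \mathrm{Binomial}(2\tau, p_a)$ with $p_a \geq c_1/(d_{\mathcal G}+1)$.

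Next I would substitute the choice of $\tau = \lceil 2(d_{\mathcal G}+1)\log n / c_1\rceil$, giving mean $\mu := 2\tau\, p_a \geq 4\log n$, and bound the binomial lower tail directly:
\[
 \Pr(m_a \leq 1) \;=\; (1-p_a)^{2\tau} + 2\tau\, p_a(1-p_a)^{2\tau-1} \;\leq\; e^{-\mu}\big(1 + e\,\mu\big),
\]
using $1-p_a \leq e^{-p_a}$ and $p_a \leq 1$. The map $\mu \mapsto e^{-\mu}(1+e\mu)$ is decreasing for $\mu > 1-1/e$, so on the range $\mu \geq 4\log n$ it is maximized at $\mu = 4\log n$, yielding $\Pr(m_a \leq 1) \leq n^{-4}(1 + 4e\log n) \leq 2n^{-2}$ for all $n \geq 3$ (the claim being trivial for smaller $n$, where $1-2n^{-1}\leq 0$). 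Summing over the at most $n$ transmitters gives $\Pr(F(t)^c) \leq 2n^{-1}$; the estimate is uniform in $t$ because the per-slot law does not depend on $t$.

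The hard part is the per-slot estimate $p_a \geq c_1/(d_{\mathcal G}+1)$. The difficulty is not the arithmetic but the modeling step: one must correctly translate the multi-step, collision-sensitive INTENT protocol---including two-hop INTENT propagation, the node-versus-outgoing-link distinction, and the collision-sensing rule---into the clean ``strictly smallest backoff among $\leq d_{\mathcal G}$ contenders'' event, and then extract a constant $c_1$ that is uniform in both the local degree $d_{\mathcal G}$ and the contention window $W$. Once this slot-wise constant and the across-slot independence from Lemma~\ref{lem: timeHomoMC} are in hand, the binomial comparison, the exponential tail bound, and the union bound are routine.
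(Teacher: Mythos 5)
Your proposal is correct and follows essentially the same route as the paper: a per-slot selection probability of at least $c_1/(d_{\mathcal G}+1)$ obtained from the strictly-smallest-backoff event and a Riemann-sum/integral comparison, amplified over $2\tau$ slots, then a union bound over the at most $n$ transmitters. The only cosmetic difference is that you bound $\Pr(m_a\leq 1)$ by a direct binomial lower-tail estimate, whereas the paper splits $[t+1,t+2\tau]$ into two length-$\tau$ halves and requires at least one selection in each half; both yield the same $2n^{-2}$ per-transmitter bound.
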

\begin{proof}
First, we show that with stochastic updating, all the transmitters are selected in the decision set at least twice during $[t+1, t+2\tau]$ with high probability, for any $t = 1, 2, 3,\ldots$

Let us consider an arbitrary transmitter $a.$ Notice that in the decision set generation stage, the decision set is generated independent of the state of the transmitter, and is independent between different time slots.
\begin{align}
   \Pr(a\in \mathcal D) &= \Pr(T_a = 0, T_x>0, \forall x\in \mathcal N^1(a)\cup \mathcal N^2(a))\nonumber\\
    &+ \Pr(T_a = 1, T_x>1, \forall x\in \mathcal N^1(a)\cup \mathcal N^2(a)) +\ldots \nonumber \\
   &+ \Pr(T_a = d_{\cal G} - 1, T_x=d_{\cal G}, \forall x\in \mathcal N^1(a)\cup \mathcal N^2(a)) \nonumber \\
   &= \frac{1}{d_{\cal G}} \left(1-\frac{1}{d_{\cal G}}\right)^{d_{ab}} + \frac{1}{d_{\cal G}} \left(1-\frac{2}{d_{\cal G}}\right)^{d_{ab}}+ \cdots\nonumber \\
   &+ \frac{1}{d_{\cal G}} \left(\frac{1}{d_{\cal G}}\right)^{d_{ab}} \nonumber \\
   &= \sum_{k=0}^{d_{\cal G} - 1} \frac{1}{d_{\cal G}} \left(1-\frac{k}{d_{\cal G}}\right)^{d_{ab}} \nonumber \\
   &= d_{\cal G} ^{-d_{ab}-1} \sum_{k=0}^{d_{\cal G} - 1} k^{d_{ab}}\nonumber \\
   &\geq d_{\cal G} ^{-d_{ab}-1} \int_{x=0}^{d_{\cal G} - 1} x^{d_{ab}} dx \nonumber \\
   &= d_{\cal G} ^{-d_{ab}-1} \frac{1}{d_{ab} + 1} (d_{\cal G}-1)^{d_{ab} + 1} \nonumber \\
   &= \left(1 - \frac{1}{d_{\cal G}}\right) ^ {d_{ab} + 1} \frac{1}{d_{ab} + 1} \nonumber \\
   & \geq  \left(1 - \frac{1}{d_{\cal G}}\right) ^ {d_{\cal G} + 1} \frac{1}{d_{ab} + 1} \nonumber \\
   & \geq  e^{-1}\left(1 - \frac{1}{d_{\cal G}}\right) \frac{1}{d_{ab} + 1} \nonumber \\
   & \geq  c_1 \frac{1}{d_{\cal G}+1},
\end{align}
where $c_1 = 0.18 < e^{-1} \frac{1}{2} \leq e^{-1}\left(1 - \frac{1}{d_{\cal G}}\right)$ is a constant.

Let $S_{a}(t)$ denote the event that transmitter $a$ is selected at least once in the decision set during time interval $[t+1, t+ \tau].$ Since the decision set is generated at each time slot independently of the previous ones, we have
\begin{align*}
  \Pr(S_{a}(t)) &= 1 - \prod_{k=t+1}^{t+\tau} (1 - \Pr(a\in \mathcal D))  \\
   &= 1 - (1 - \Pr (a\in \mathcal D))^{\tau}\\
   & \geq  1- \left(1 - \frac{c_1}{(d_{\cal G}+1)}\right)^  {\frac{2 (d_{\cal G}+1) \log n}{c_1}}\\
  &\geq  1- e^{-\frac{2 c_1 (d_{\cal G}+1) \log n }{(d_{\cal G}+1) c_1}}\\
  &= 1 - e^{-2\log n}\\
  &= 1 - n^{-2}.
\end{align*}
Then, let $H_{a}(t)$ denote the event that $a$ is selected at least twice in the decision set during $[t+1, t+2\tau].$ Obviously, if $a$ is in the decision set at least once during $[t+1, t+\tau],$ and during $[t+\tau+1, t+2\tau],$ then $H_{a}(t)$ is true, and we have
$$\Pr(H_{a}(t)) > \Pr(S_{a}(t)) \Pr(S_{a}(t+\tau)) = (1-n^{-2})^2 > 1-2n^{-2}.$$

Let $F(t)$ denote the event that every transmitter is selected in the decision set at least twice during $[t+1, t+ 2\tau],$ i.e.,$F(t) = \bigcap_{a\in \mathcal V}H_{a}(t).$ By using the union bound, we then have
\begin{align*}
  \Pr(F(t)) &= \Pr\left(\bigcap_{a\in \mathcal V} H_a(t)\right) \\
   &= 1 - \Pr\left(\bigcup_{a\in \mathcal V}\bar H_a(t)\right) \\
   &\geq 1 - \sum_{a\in \mathcal V} \Pr(\bar{H}_a(t)) \\
   & \geq  1 - n \sup_{a\in \mathcal V} \Pr(\bar{H}_a(t))\\
   & =  1 - n \left(1 - \inf_{a\in \mathcal V}\Pr(H_a(t))\right)\\
   & \geq  1 - n \cdot 2n^{-2}\\
   &\geq 1 - 2n^{-1},
\end{align*}
where the second inequality holds because the number of transmitters is no more thans the number of links in the network, i.e.,$|\mathcal V|\leq n.$
Thus, we have shown that during a time interval of length $2\tau,$ where $\tau = O(d_{\cal G} \log n),$ with probability greater than $1-2n^{-1},$ each transmitter is selected in the decision set at least twice.
\end{proof}
Next, we will show that given $F(t),$ $P(t+2\tau, \mathbf{\check{p}} | t, \mathbf{\hat{p}})>\delta(t)$ for any $\mathbf{\hat{p}}, \mathbf{\check{p}} \in \mathcal P.$ \begin{lemma} \label{lem: stochasticupdate}
$$\inf_{\mathbf{\check{p}}, \mathbf{\hat{p}} \in \mathcal P} P(t+2\tau, \mathbf{\check{p}}| t, \mathbf{\hat{p}})\geq \delta(t)$$ for any $t=1,2,\ldots,$
where $\delta(t) = c_2 e^{-2n\Delta/K_{t+2\tau}},$ and $c_2 = (1 - 2n^{-1}) (p_{a}^{\rm max}d_{\mathcal G})^{-2n}.$
\end{lemma}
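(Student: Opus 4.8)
The plan is to establish a uniform minorization (lower bound) on the $2\tau$-step transition kernel density, mirroring the full-sweep lower bound used in the discrete Geman--Geman argument but now for a continuous state space and a randomly generated decision set. The starting point is Lemma~\ref{lem: decisionSet}: I condition on the event $F(t)$ that every transmitter is selected in the decision set at least twice during $[t+1,t+2\tau]$, which occurs with probability at least $1-2n^{-1}$. On $F(t)$ I will exhibit one favorable ``route'' from an arbitrary $\hat{\mathbf p}$ to an arbitrary $\check{\mathbf p}$ that uses exactly two updates per transmitter: the first update repositions that transmitter's outgoing link, and the second sets the target power on the correct outgoing link, so that two updates suffice to reach any target configuration regardless of the starting configuration. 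Since $|\mathcal V|\le n,$ this route consists of at most $2n$ single-link power updates.

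Next I bound the contribution of each single update. Each favorable update requires node $a$ to select the correct outgoing link, which happens with probability at least $1/d_{\mathcal G}$ (by the link-selection rule, $1/d_a\ge 1/d_{\mathcal G}$ in either branch), and then to sample a power in the required range. For the latter I invoke Lemma~\ref{lem: transition}: the conditional update density is $g(p)=e^{U_{ab}(p)/K}/\int_0^{p_a^{\rm max}} e^{U_{ab}(p')/K}\,dp',$ where $U_{ab}(p)=\tilde V_{ab}(p)-\epsilon p.$ Bounding the numerator below by $e^{\min_p U_{ab}(p)/K}$ and the normalizer above by $p_a^{\rm max} e^{\max_p U_{ab}(p)/K}$ gives $g(p)\ge (p_a^{\rm max})^{-1} e^{-(\max_p U_{ab}-\min_p U_{ab})/K}.$ The key estimate here is that varying $p_{ab}$ alone changes the global objective $U=\tilde V-\epsilon\sum_{(ab)} p_{ab}$ by exactly the change in the local $U_{ab}$ (the remaining terms do not depend on $p_{ab}$), so the local oscillation is at most the global oscillation $\Delta;$ hence $g(p)\ge (p_a^{\rm max})^{-1} e^{-\Delta/K}.$ Because $K_s=K_0/\log(2+s)$ is non-increasing, the smallest temperature over the window is $K_{t+2\tau},$ so every factor is at least $(p_a^{\rm max})^{-1} e^{-\Delta/K_{t+2\tau}}.$

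Multiplying the at-most-$2n$ per-update factors (each contributing $1/d_{\mathcal G}$ from link selection and $(p_a^{\rm max})^{-1}e^{-\Delta/K_{t+2\tau}}$ from the power density) together with the $\Pr(F(t))\ge 1-2n^{-1}$ factor yields exactly
$$P(t+2\tau,\check{\mathbf p}\mid t,\hat{\mathbf p}) \ge (1-2n^{-1})\,(p_a^{\rm max} d_{\mathcal G})^{-2n}\, e^{-2n\Delta/K_{t+2\tau}} = \delta(t),$$
and taking the infimum over $\hat{\mathbf p},\check{\mathbf p}\in\mathcal P$ finishes the lemma.

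I expect the main obstacle to be the continuous-state-space bookkeeping in the step that turns the per-update density bounds into a lower bound on the $2\tau$-step kernel \emph{density}. Unlike the discrete case, where one multiplies transition probabilities along a fixed path, here the $2\tau$-step density is a $(2\tau-1)$-fold integral over intermediate configurations, and I must organize this integral so that the designated $2n$ updates carry the configuration to $\check{\mathbf p}$ while the remaining randomness---extra decision-set memberships, non-designated link choices, and intermediate powers that are later overwritten---is integrated out without weakening the bound. Handling this cleanly (for instance by conditioning on the favorable decision-set and link-selection realization and restricting the intermediate powers to shrinking neighborhoods of the route, then noting that the overwritten coordinates contribute proper densities that integrate to one) is where the argument genuinely extends Geman--Geman rather than merely transcribing it; the changing temperature is a secondary point resolved by using $K_{t+2\tau}$ throughout.
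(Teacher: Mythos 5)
Your proposal follows essentially the same route as the paper's proof: condition on the event $F(t)$ from Lemma~\ref{lem: decisionSet}, use at most two decision-set appearances per transmitter (one to deactivate the wrong outgoing link, one to set the target power), lower-bound each update density by $(p_a^{\rm max})^{-1}e^{-\Delta/K_{t+2\tau}}$ via Lemma~\ref{lem: transition} and the observation that the local oscillation of $\tilde V_{ab}-\epsilon p_{ab}$ is at most $\Delta$, pick up a $1/d_{\mathcal G}$ factor from link selection, and multiply the at-most-$2n$ factors together with $\Pr(F(t))\geq 1-2n^{-1}$. The continuous-state bookkeeping you flag as the main obstacle is in fact handled no more explicitly in the paper than in your sketch, so there is no substantive difference.
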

\begin{proof}
Notice that in the power control algorithm, not every link whose transmitter is in the decision set can update its power. We define $\mathcal D_l$ to be the set of links who can update their transmit powers, called the link decision set. The intuition of this Lemma is that, given $F(t),$ each link has the chance to update its power, and the transition density between any two power levels of the link is lower bounded. First, if $a$ has only one outgoing link $(ab),$ then $(ab)$ must have the chance to update its power when $a$ is in the decision set. On the other hand, if $a$ has more than one outgoing link, and assume $p_{ac}>0$ at time $t,$ $\hat{p}(ab)>0$ at time $t+2\tau.$ Since it is in the decision set twice, it can turn off $(ac)$ at the first time it is in $\mathcal D,$ then it can change the power of link $(ab)$ at the second time that it is in $\mathcal D.$ Let us then derive the lower bound on $P(t+2\tau, \mathbf{\check{p}} | t, \mathbf{\hat{p}}).$

For a transmitter $a$ that has only one outgoing link $(ab),$ assume that the last time $a\in \mathcal D$ is at time slot $t_a.$ We then know $(ab)\in \mathcal D_l$ at time $t_a,$ and
\begin{align*}
    &P(p_{ab}(t+2\tau) = \check{p}_{ab} | \mathbf p(t_a - 1), (ab)\in \mathcal D_l)\\
    =& P(p_{ab}(t_a) = \check{p}_{ab} | \mathbf p(t_a - 1), (ab)\in \mathcal D_l)\\
    =&\frac{1}{Z_{ab}(K_t)}e^{[\tilde V_{ab}(p_{ab}) - \epsilon(p_{ab})]/K_t}\\
    \geq& \frac{e^{[\tilde V_{ab}(p_{ab}) - \epsilon(p_{ab})]/K_t}}{ \int_{[0, p_a^{\rm max}]} e^{[\tilde V_{ab}(\hat{p}_{ab}) - \epsilon(\hat{p}_{ab})]/K_t} d\hat{p}_{ab}} \\
    \geq& \frac{1}{p_{a}^{\rm max}}e^{(U_*-U^*)/K_{t_a}}\\
    \geq& \frac{1}{p_{a}^{\rm max}} e^{-\Delta/K_{t+2\tau}}\\
    \geq& e^{-2\Delta/K_{t+2\tau}} (p_{a}^{\rm max}d_{\mathcal G})^{-2}.
\end{align*}
For transmitter $a$ that has more than one outgoing link, assume that the last two times $a\in \mathcal D$ is at time slots $t_{a1}$ and $t_{a2},$ respectively. Notice that only one of outgoing links of $a$ can be active at each time slot. Suppose at time $t_{a1},$ link $(ac)$ is active, and in $\mathbf{\check{p}},$ at time $t+2\tau,$ link $(ab)$ is active.
\begin{align*}
    &P(p_{ab}(t+2\tau) = \check{p}_{ab}|\mathbf p(t) = \mathbf{\hat{p}})\\
     =& P(p_{ab}(t_{a2}) = \check{p}_{ab}| \mathbf p(t_{a2} - 1), (ab)\in \mathcal D_l) \\ &P(p_{ac}(t_{a1})=0 | \mathbf p(t_{a1} - 1), (ac)\in \mathcal D_l)\\
    \geq& \frac{1}{p_{a}^{\rm max}}e^{-\Delta/K_{t_{a1}}} d_{a}^{-1} \frac{1}{p_{a}^{\rm max}} e^{-\Delta/K_{t_{a2}}} d_{a}^{-1}\\
    \geq& e^{-2\Delta/K_{t+2\tau}} (p_{a}^{\rm max}d_{\mathcal G})^{-2}.
\end{align*}
Clearly, if none of the outgoing link of $a$ is active at time $t_{a1},$ then
\begin{align*}
    &P(p_{ab}(t+2\tau) = \check{p}_{ab}|\mathbf p(t) = \mathbf{\hat{p}})\\
     =& P(p_{ab}(t_{a2}) = \check{p}_{ab}| \mathbf p(t_{a2} - 1), (ab)\in \mathcal D_l) \\
    \geq& \frac{1}{p_{a}^{\rm max}} e^{-\Delta/K_{t_{a1}}} d_{a}^{-1}\\
    \geq& e^{-2\Delta/K_{t+2\tau}}  (p_{a}^{\rm max}d_{\mathcal G})^{-2}.
\end{align*}
Thus, given each transmitter is in the decision at least twice during $[t+1, t+2\tau],$, i.e., given $F(t),$ we have
\begin{align*}
    P(t+2\tau, \mathbf{\check{p}} | t, \mathbf{\hat{p}}) \geq & \prod_{a\in \mathcal V} e^{-2\Delta/K_{t+2\tau}}  (p_{a}^{\rm max}d_{\mathcal G})^{-2}\\
    \geq & e^{-2n\Delta/K_{t+2\tau}}  (p_{a}^{\rm max}d_{\mathcal G})^{-2n}.
\end{align*}

Together with $\Pr(F(t)) > 1 - 2 n^{-1},$ we have
$$P(t+2\tau, \mathbf{\check{p}} | t, \mathbf{\hat{p}}) \geq (1 - 2 n^{-1}) e^{-2n\Delta/K_{t+2\tau}}  (p_{a}^{\rm max}d_{\mathcal G})^{-2n}.$$
Let $c_2 = (1 - 2 n^{-1}) (p_{a}^{\rm max}d_{\mathcal G})^{-2n},$ and $\delta(t) = c_2 e^{-2n\Delta/K_{t+2\tau}},$ the lemma then follows.
\end{proof}

\begin{lemma} \label{lem: densityconvergence}
Let
\begin{equation}
    K_t = \left\{\begin{array}{c c}
                  \frac{K_0}{\log(2+t)} & 0<t<N, \\
                  \frac{K_0}{\log(2+N)} & N\leq t,
                \end{array}   \right.
\end{equation}
where $N\in \mathbb N$ is a fixed integer.
Then, for every $t_0 = 0, 1, 2, \ldots,$
$$\lim_{t \rightarrow \infty} \sup_{\mathbf p_0, \mathbf{\hat{p}}_0} |P(t , \mathbf p_t | t_0, \mathbf p_0 ) - P(t , \mathbf p_t | t_0 , \mathbf{\hat{p}}_0 )| d \mathbf p_t = 0.$$
\end{lemma}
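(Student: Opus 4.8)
The plan is to read the claim as a \emph{weak ergodicity} statement. Writing $\mu_t(\cdot)=P(t,\cdot\,|\,t_0,\mathbf p_0)$ and $\nu_t(\cdot)=P(t,\cdot\,|\,t_0,\hat{\mathbf p}_0)$ for the two conditional densities at time $t,$ the assertion is exactly that $\|\mu_t-\nu_t\|\to 0$ uniformly over the starting configurations $\mathbf p_0,\hat{\mathbf p}_0$ (the displayed expression is the $L^1$ distance between the two kernels). The whole argument rests on Lemma~\ref{lem: stochasticupdate}, which furnishes a uniform Doeblin-type minorization of the $2\tau$-step kernel, namely $P(s+2\tau,\check{\mathbf p}\,|\,s,\hat{\mathbf p})\ge\delta(s)$ for all $\hat{\mathbf p},\check{\mathbf p}\in\mathcal P.$ I will use this to turn each block of length $2\tau$ into a strict contraction in the $L^1$ metric, and then chain the contractions.

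First I would establish the one-block contraction. Fix $s$ and set $h=\mu_s-\nu_s,$ so that $\int_{\mathcal P}h=0$ and $\int_{\mathcal P}|h|=\|\mu_s-\nu_s\|.$ Pushing both densities through the block kernel and subtracting the common floor $\delta(s)$ annihilates the constant part because $\int h=0$: writing $P(s+2\tau,\check{\mathbf p}\,|\,s,\hat{\mathbf p})=\delta(s)+Q(\check{\mathbf p}\,|\,\hat{\mathbf p})$ with $Q\ge 0,$ the $\delta(s)$ term integrated against $h$ vanishes, leaving only the $Q$ term. Bounding by absolute values, interchanging the order of integration, and using $\int_{\mathcal P}Q(\check{\mathbf p}\,|\,\hat{\mathbf p})\,d\check{\mathbf p}=1-\delta(s)\,\mathrm{Leb}(\mathcal P)$ yields
\begin{equation*}
\|\mu_{s+2\tau}-\nu_{s+2\tau}\|\le\bigl(1-\delta(s)\,\mathrm{Leb}(\mathcal P)\bigr)\,\|\mu_s-\nu_s\|.
\end{equation*}
This is a Dobrushin-type estimate; note $\delta(s)\,\mathrm{Leb}(\mathcal P)\le 1$ automatically, since $\delta(s)$ lower-bounds a probability density, and $\mathrm{Leb}(\mathcal P)>0$ because $\mathcal P$ is full-dimensional.

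Next I would iterate over the blocks $s_k=t_0+2k\tau.$ The decisive observation is that the temperature is \emph{frozen} for $t\ge N,$ i.e. $K_t\equiv K_0/\log(2+N),$ so for every block with $s_k\ge N$ the minorization constant is the same positive number $\delta^\ast=c_2\,e^{-2n\Delta\log(2+N)/K_0}.$ Hence each such block contracts by the fixed factor $1-\delta^\ast\,\mathrm{Leb}(\mathcal P)\in[0,1).$ The finitely many blocks before $N$ contribute factors in $[0,1]$ and can only help. Composing $m$ post-$N$ blocks and using the crude bound $\|\mu-\nu\|\le 2$ gives
\begin{equation*}
\sup_{\mathbf p_0,\hat{\mathbf p}_0}\|\mu_{s_k+2m\tau}-\nu_{s_k+2m\tau}\|\le 2\bigl(1-\delta^\ast\,\mathrm{Leb}(\mathcal P)\bigr)^{m},
\end{equation*}
which is uniform in the initial configurations and tends to $0$ as $m\to\infty.$ Since the per-step contraction factor never exceeds $1,$ the distance is nonincreasing between block endpoints, so the full limit along $t\to\infty$ is also $0,$ proving the lemma.

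The step I expect to be most delicate conceptually, though short, is the exploitation of the frozen schedule. The minorization constant $\delta(t)=c_2e^{-2n\Delta/K_{t+2\tau}}$ decays to $0$ as the temperature cools, so had $K_t$ continued down to $0$ the factors $1-\delta(t)\,\mathrm{Leb}(\mathcal P)$ would approach $1$ and their infinite product could stay bounded away from $0,$ destroying weak ergodicity. Freezing $K_t$ at $N$ is precisely the device that keeps $\sum_k\delta(s_k)=\infty$ and forces the product to vanish; it is the reason the theorem is stated with the two-phase schedule. The remaining care is pure bookkeeping: aligning the block decomposition with the index $t_0,$ and checking that a partial (sub-$2\tau$) block at the end cannot increase the $L^1$ distance, both of which follow from the contraction factor being at most one.
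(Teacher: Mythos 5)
Your proof is correct and shares the paper's skeleton: both arguments rest on the Doeblin-type minorization of the $2\tau$-step kernel from Lemma~\ref{lem: stochasticupdate}, extract a per-block contraction factor $1-\delta(\cdot)\,\sigma(\mathcal P)$ in the $L^1$ metric, and drive the product of these factors to zero. Your derivation of the one-block contraction (subtracting the constant floor $\delta(s)$ against the mean-zero difference $h$ and applying Fubini) is cleaner than the paper's, which reaches the same factor via an upper/lower sandwich built from the sets $\mathcal P_{U,\varepsilon}$ and $\mathcal P_{L,\varepsilon}$ of near-maximizers and near-minimizers of $P(t,\mathbf p_t\,|\,T_1,\cdot)$. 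Where you genuinely diverge is the final step: you exploit the frozen schedule to get a \emph{uniform} $\delta^\ast>0$ and hence geometric decay, whereas the paper keeps the time-varying $\delta(t_0+2i\tau)\geq c_2/(2+t_0+2(i+1)\tau)$ and concludes from divergence of the harmonic series that $\prod_i\bigl[1-\sigma(\mathcal P)\,\delta(t_0+2i\tau)\bigr]\to 0$. Your route is simpler and, as a bonus, works for any $K_0$; the paper's route implicitly uses $K_0\geq 2n\Delta$ (the hypothesis of Theorem~\ref{thm: Optimal}) but buys the stronger fact that weak ergodicity holds even \emph{without} freezing. For that reason your closing remark is a mischaracterization: with $K_0=2n\Delta$ one has $\delta(t)=c_2/(2+t+2\tau)$, so $\sum_k\delta(s_k)=\infty$ already under the pure logarithmic schedule — this is exactly the classical Geman--Geman annealing condition, and the freeze at $N$ is needed not for this lemma but for Lemma~\ref{lem: stationaryconvergence} (so that there is a fixed target $\pi_{K_{t^*}}$ to converge to). One last small point: you read the (typographically garbled) display as an $L^1$ statement while the paper's proof establishes a bound pointwise in $\mathbf p_t$; either version suffices for the way the lemma is invoked in the proof of Theorem~\ref{thm: Optimal}, so this does not affect correctness.
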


\begin{proof}
Define $T_k = t_0 + 2k\tau,$ $k=1, 2, \ldots.$
First, we consider a fixed $\mathbf p_t.$
$$P(t, \mathbf p_t | t_0, \mathbf p_0) = \smallint_{\mathcal P} P(t, \mathbf p_t | T_1, \mathbf p)  P(T_1, \mathbf p | t_0, \mathbf p_0 ) d\mathbf p.$$
Let $\mathcal P_L$ be the set of power configurations that minimize $P(t, \mathbf p_t | T_1, \mathbf p),$ i.e.,
$$\mathcal P_L = \arg\min_{\mathbf p \in \mathcal P} P(t, \mathbf p_t | T_1, \mathbf p),$$
and $\mathbf p_*$ be an arbitrary element in $\mathcal P_L.$
Further, for given $\varepsilon>0,$ we define a small neighborhood around the power configurations in $\mathcal P_L:$ $$\mathcal P_{L, \varepsilon} \triangleq \{\mathbf p: |P(t, \mathbf p_t | T_1, \mathbf p) - P(t, \mathbf p_t | T_1, \mathbf p_*)| < \varepsilon\}.$$
Similarly, we define
$$\mathcal P_U = \arg\max_{\mathbf p \in \mathcal P} P(t, \mathbf p_t | T_1, \mathbf p),$$
and $\mathbf p^*$ be an arbitrary element in $\mathcal P_U.$ Also, we define
$$\mathcal P_{U, \varepsilon} \triangleq \{\mathbf p: |P(t, \mathbf p_t | T_1, \mathbf p) - P(t, \mathbf p_t | T_1, \mathbf p^*)| < \varepsilon\}.$$
Next, we derive an upper bound and an lower bound on $P(t, \mathbf p_t | t_0, \mathbf p_0)$ for different $\mathbf p_0.$
\begin{align*}
    &P(t, \mathbf p_t | t_0, \mathbf p_0)\\
    =&\smallint_{\mathcal P} P(t, \mathbf p_t | T_1, \mathbf p) P(T_1, \mathbf p | t_0, \mathbf p_0 ) d\mathbf p\\
    =&\smallint_{\mathcal P_{U,\varepsilon}} P(t, \mathbf p_t | T_1, \mathbf p) P(T_1, \mathbf p | t_0, \mathbf p_0 ) d\mathbf p \\
    &+\smallint_{\mathcal P_{L,\varepsilon}} P(t, \mathbf p_t | T_1, \mathbf p) P(T_1, \mathbf p | t_0, \mathbf p_0 ) d\mathbf p\\
    &+\smallint_{\mathcal P\setminus (\mathcal P_{U,\varepsilon}\cup \mathcal P_{L,\varepsilon})} P(t, \mathbf p_t | T_1, \mathbf p) P(T_1, \mathbf p | t_0, \mathbf p_0 ) d\mathbf p
\end{align*}
Since $\smallint_{\mathcal P}P(T_1, \mathbf p | t_0, \mathbf p_0 ) d\mathbf p = 1,$ and by Lemma \ref{lem: stochasticupdate}, we have $$P(T_1, \mathbf p | t_0, \mathbf p_0 )\geq \delta(t_0).$$ Hence, the maximum value of $P(t, \mathbf p_t | t_0, \mathbf p_0)$ is obtained by letting $P(T_1, \mathbf p | t_0, \mathbf p_0)=\delta(t_0)$ for all $\mathbf p \not\in \mathcal P_{U,\varepsilon}.$ Thus,
$$\smallint_{\mathcal P_{U,\varepsilon}}  P(T_1, \mathbf p | t_0, \mathbf p_0 ) \leq 1 - \sigma(\mathcal P \setminus \mathcal P_{U,\varepsilon})\delta(t_0),$$
where $\sigma$ is used to denoted the Lebesque measure of a set, in order to distinguish with the modulation function.
Hence, we have,
\begin{align*}
    &P(t, \mathbf p_t | t_0, \mathbf p_0)\\
    \leq&P(t, \mathbf p_t | T_1, \mathbf p^*)\smallint_{\mathcal P_{U,\varepsilon}}  P(T_1, \mathbf p | t_0, \mathbf p_0 ) d\mathbf p \\
    &+\smallint_{\mathcal P_{L,\varepsilon}} P(t, \mathbf p_t | T_1, \mathbf p) P(T_1, \mathbf p | t_0, \mathbf p_0 ) d\mathbf p\\
    &+\smallint_{\mathcal P\setminus (\mathcal P_{U,\varepsilon}\cup \mathcal P_{L,\varepsilon})} P(t, \mathbf p_t | T_1, \mathbf p) P(T_1, \mathbf p | t_0, \mathbf p_0 ) d\mathbf p\\
    \leq& P(t, \mathbf p_t | T_1, \mathbf p^*)[1 - \sigma(\mathcal P\setminus \mathcal P_{U,\varepsilon}) \delta(t_0)]\\
    +& (P(t, \mathbf p_t | T_1, \mathbf p_*) + \varepsilon) \sigma(\mathcal P_{L,\varepsilon}) \delta(t_0)\\
    +& \delta(t_0) \smallint_{\mathcal P\setminus (\mathcal P_{U,\varepsilon}\cup \mathcal P_{L,\varepsilon})} P(t, \mathbf p_t | T_1, \mathbf p) d\mathbf p,
\end{align*}
Similarly, we can obtain a lower bound on $P(t, \mathbf p_t | t_0, \mathbf p_0)$ as well:
\begin{align*}
    P(t, \mathbf p_t | t_0, \mathbf p_0)\geq& P(t, \mathbf p_t | T_1, \mathbf p_*)[1 - \sigma(\mathcal P\setminus \mathcal P_{L,\varepsilon}) \delta(t_0)]\\
    &+ (P(t, \mathbf p_t | T_1, \mathbf p^*) - \varepsilon) \sigma(\mathcal P_{U,\varepsilon}) \delta(t_0)\\
    &+ \delta(t_0) \smallint_{\mathcal P\setminus (\mathcal P_{U,\varepsilon}\cup \mathcal P_{L,\varepsilon})} P(t, \mathbf p_t | T_1, \mathbf p) d\mathbf p,
\end{align*}
Hence, by taking the difference between the upper bound and lower bounded derived above, we have that for any $\mathbf p_t, \mathbf p_0,$ and $\mathbf{\hat{p}}_0,$
\begin{align*}
    &|P(t, \mathbf p_t | t_0, \mathbf p_0) - P(t, \mathbf p_t | t_0, \mathbf{\hat{p}}_0)|\\
    \leq&\left|P(t, \mathbf p_t | T_1, \mathbf p^*)-P(t, \mathbf p_t | T_1, \mathbf p_*)\right|[1 - \sigma(\mathcal P) \delta(t_0)]\\
    &+ \varepsilon \delta(t_0) (\sigma(\mathcal P_{U,\varepsilon}) + \sigma(\mathcal P_{L, \varepsilon}))\\
    \leq&(P(t, \mathbf p_t | T_1, \mathbf p^*)-P(t, \mathbf p_t | T_1, \mathbf p_*))[1-\sigma(\mathcal P) \delta(t_0)]\\
    =& \sup_{\mathbf p, \mathbf{\hat{p}}} |P(t, \mathbf p_t | T_1, \mathbf p) - P(t, \mathbf p_t | T_1, \mathbf{\hat{p}})|[1-\sigma(\mathcal P) \delta(t_0)],
\end{align*}
where the last inequality holds by the definition of $\mathbf p_*$ and $\mathbf p^*,$ and also because $\varepsilon$ is arbitrarily small.
Since the inequality holds for any $\mathbf p_0, \mathbf{\hat{p}}_0,$ we have
\begin{align*}
    &\sup_{\mathbf p_0, \mathbf{\hat{p}}_0}|P(t, \mathbf p_t | t_0, \mathbf p_0) - P(t, \mathbf p_t | t_0, \mathbf{\hat{p}}_0)| \\
    \leq&\sup_{\mathbf p, \mathbf{\hat{p}}} |P(t, \mathbf p_t | T_1, \mathbf p) - P(t, \mathbf p_t | T_1, \mathbf{\hat{p}})|[1-\sigma(\mathcal P) \delta(t_0)].
\end{align*}

Let $k = \lfloor\frac{t-t_0}{2\tau} \rfloor,$ such that $T_k\leq t$ is the nearest time point to $t$ which is of the form $t_0 + 2i\tau.$ Notice that $K_t \geq \frac{2n\Delta}{\log (2+t)},$ and $\delta(t)= c_2 e^{-2n\Delta/K_{t+2\tau}},$
we have
$\delta(t_0+2i\tau) \geq\frac{c_2}{2+t_0+2(i+1)\tau}.$
Hence,
\begin{align*}
    &\sup_{\mathbf p_0, \mathbf{\hat{p}}_0} |P(t, \mathbf p_t | t_0, \mathbf p_0) - P(t, \mathbf p_t | t_0, \mathbf{\hat{p}}_0)|\\
    \leq& \sup_{\mathbf p, \mathbf{\hat{p}}}|P(t, \mathbf p_t | T_k, \mathbf p) - P(t, \mathbf p_t | T_k, \mathbf{\hat{p}})| \\
    &\prod_{i=0}^{k-1}[1-\sigma(\mathcal P)\delta(t_0+2i\tau)]\\
    \leq&\sup_{\mathbf p, \mathbf{\hat{p}}}|P(t, \mathbf p_t | T_k, \mathbf p) - P(t, \mathbf p_t | T_k, \mathbf{\hat{p}})| \\
    &\prod_{i=0}^{k-1}\left[1-\frac{c_2\sigma(\mathcal P)}{2+t_0+2(i+1)\tau}\right]
\end{align*}
By Lemma \ref{lem: transition}, it can be easily verified that $|P(t, \mathbf p_t | T_k, \mathbf p) - P(t, \mathbf p_t | T_k, \mathbf{\hat{p}})|$
is bounded for each $\mathbf p_t, \mathbf p, \mathbf{\hat{p}},$ and $t.$
Further, since $\lim_{k\rightarrow \infty}\sum_{i=0}^{k-1}(t_0+2(i+1)\tau+2) = \infty,$
we have
$$\lim_{k\rightarrow \infty} \prod_{i=0}^{k-1}\left[1-\sigma(\mathcal P)\frac{c_2}{2+t_0+2(i+1)\tau}\right] = 0.$$
Note that $k\rightarrow \infty$ as $t\rightarrow \infty,$ we then obtain
$$\lim_{t\rightarrow \infty}\sup_{\mathbf p_t, \mathbf p_0, \mathbf p_0} |P(t, \mathbf p_t | t_0, \mathbf p_0) - P(t, \mathbf p_t | t_0, \mathbf {\hat{p}}_0)| =0,$$
which concludes the lemma.
\end{proof}

\begin{lemma} \label{lem: stationaryconvergence}
For any given $\varepsilon>0,$ there exists $N\in \mathbb N$ such that for any $t\geq t_0\geq N, t^*\geq N,$ we have
$$||P(t, \cdot |  t_0, \pi_{K_{t^*}}) - \pi_{K_{t^*}}|| < \varepsilon.$$
\end{lemma}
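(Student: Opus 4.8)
The plan is to exploit the fact that, under the stated cooling schedule, the temperature is \emph{frozen} once $t \geq N$. Concretely, for any $t \geq t_0 \geq N$ the one-step kernels $P(s+1,\cdot | s,\cdot)$ applied for $s = t_0, t_0+1, \ldots, t-1$ all satisfy $s \geq N$, so each of them uses the common temperature $K_N := K_0/\log(2+N)$. Moreover, since $t^* \geq N$ forces $K_{t^*} = K_N$, the reference density is $\pi_{K_{t^*}} = \pi_{K_N}$. Hence over the whole window $[t_0,t]$ the chain is time-homogeneous, driven by the fixed-temperature-$K_N$ Gibbs kernel, and $\pi_{K_{t^*}}$ is precisely its stationary density.

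First I would invoke Lemma~\ref{lem: Stationary}, which asserts that for a fixed temperature the power-configuration chain is a Markov chain with stationary density $\pi_{K_N}$. By the definition of stationarity this gives $\int_{\mathcal P} P(t_0+1,\mathbf p' | t_0,\mathbf p)\,\pi_{K_N}(\mathbf p)\,d\mathbf p = \pi_{K_N}(\mathbf p')$, i.e. $P(t_0+1,\cdot | t_0,\pi_{K_N}) = \pi_{K_N}$. Iterating this one step at a time from $t_0$ up to $t$ (each step being the same $K_N$-kernel) yields $P(t,\cdot | t_0,\pi_{K_{t^*}}) = \pi_{K_{t^*}}$ for every $t \geq t_0 \geq N$. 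Consequently the $L^1$ distance in the statement is identically zero, hence below any prescribed $\varepsilon>0$; the existence of $N$ is immediate, and in the proof of Theorem~\ref{thm: Optimal} one simply takes $N$ large enough that $\pi_{K_N}$ additionally concentrates its mass on $\mathcal P_\delta^*$.

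The role this lemma plays is to supply the invariance half of a strong-ergodicity argument: combined with the weak-ergodicity estimate of Lemma~\ref{lem: densityconvergence}, the triangle inequality
$$||P(t,\cdot | t_0,\mu) - \pi_{K_N}|| \leq ||P(t,\cdot | t_0,\mu) - P(t,\cdot | t_0,\pi_{K_N})|| + ||P(t,\cdot | t_0,\pi_{K_N}) - \pi_{K_N}||$$
forces $P(t,\cdot | t_0,\mu) \to \pi_{K_N}$ from an arbitrary start $\mu$, since the first term vanishes by weak ergodicity and the second is the quantity bounded here.

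I do not expect a substantive obstacle: the only points requiring care are (i) checking that the stationarity identity of Lemma~\ref{lem: Stationary} applies to the full randomized-decision-set kernel on the continuous state space $\mathcal P$, and not merely to a single conditioned update, and (ii) verifying that the freezing at time $N$ really does make every kernel on $[t_0,t]$ share the temperature $K_N$, so that $\pi_{K_{t^*}}$ is \emph{exactly} invariant. Were the schedule not frozen, i.e. if $K_s$ kept decreasing on $[t_0,t]$, this lemma would become the genuinely hard step, because one would then have to balance the contraction rate of the chain against the cumulative drift $\sum_s ||\pi_{K_s} - \pi_{K_{s+1}}||$ of the moving stationary densities; the freezing is precisely the device that reduces that delicate estimate to the exact invariance used above.
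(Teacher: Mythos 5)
Your proof is correct, but it takes a more direct route than the paper's. You observe that under the lemma's hypotheses ($t \geq t_0 \geq N$, $t^* \geq N$) the cooling schedule is frozen at $K_N = K_0/\log(2+N)$, so every one-step kernel on $[t_0,t]$ is the same fixed-temperature kernel, and Lemma~\ref{lem: Stationary} (whose detailed-balance computation covers the full randomized-decision-set kernel, so your caveat (i) is satisfied) gives \emph{exact} invariance of $\pi_{K_{t^*}} = \pi_{K_N}$; iterating yields $P(t,\cdot|t_0,\pi_{K_{t^*}}) = \pi_{K_{t^*}}$ and the $L^1$ distance is identically zero. The paper instead runs the standard strong-ergodicity machinery: it first proves the $L^1$ non-expansiveness $||P(t,\cdot|t_0,\mu) - \pi_{K_t}|| \leq ||P(t-1,\cdot|t_0,\mu) - \pi_{K_t}||$, then argues that $\sum_t ||\pi_{K_t} - \pi_{K_{t+1}}|| < \infty$ (via a monotonicity claim about $\pi_{K_t}(\mathbf p)$ on and off the optimal set, which is itself somewhat delicate on a continuous state space), and finally telescopes, choosing $N$ so the tail sums are below $\varepsilon/2$; the freezing enters only to make the initial term $||\pi_{K_{t_0}} - \pi_{K_{t^*}}||$ vanish. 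Under the frozen schedule all of the paper's drift terms $||\pi_{K_m} - \pi_{K_{m+1}}||$ for $m \geq t_0 \geq N$ are zero anyway, so the two proofs agree in their conclusion, but yours reaches it without the summability estimate and yields the stronger statement of exact equality. What the paper's version buys is robustness: its argument is the one you would need if the temperature continued to decrease on $[t_0,t]$, exactly as you note at the end of your proposal. For the lemma as actually stated, your argument is complete and arguably cleaner.
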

\begin{proof}

First, we claim that for any $t>t_0\geq 0,$ and starting with any distribution $\mu,$
$$||P(t,\cdot| t_0, \mu) - \pi_{K_t}|| \leq ||P(t-1, \cdot| t_0, \mu) - \pi_{K_{t}}||.$$
\begin{align*}
        &||P(t,\cdot| t_0, \mu) - \pi_{K_t}||\\
    =&\smallint_{\mathcal P} |P(t, \mathbf{\hat{p}} | t_0, \mu) - \pi_{K_t}(\mathbf{\hat{p}})| d\mathbf{\hat{p}}\\
    =& \smallint_{\mathcal P}|\smallint_{\mathcal P_{\mathcal D_l}} P(t, (\mathbf{\hat{p}}_{\mathcal D_l} , \mathbf{\hat{p}}_{-\mathcal D_l})| t-1, (\mathbf{\check{p}}_{\mathcal D_l}, \mathbf{\hat{p}}_{-\mathcal D_l})\\
     & P(t-1, (\mathbf{\check{p}}_{\mathcal D_l}, \mathbf{\hat{p}}_{-\mathcal D_l}) | t_0, \mu) - \\
     & \pi_{K_t}((\mathbf{\check{p}}_{\mathcal D_l}, \mathbf{\hat{p}}_{-\mathcal D_l}))P(t, (\mathbf{\hat{p}}_{\mathcal D_l} , \mathbf{\hat{p}}_{-\mathcal D_l}) | t-1, (\mathbf{\check{p}}_{\mathcal D_l}, \mathbf{\hat{p}}_{-\mathcal D_l})) d\mathbf{\check{p}}_{\mathcal D_l}|d\mathbf{\hat{p}}\\
    \leq& \smallint_{\mathcal P}\smallint_{\mathcal P_{\mathcal D_l}}P(t, (\mathbf{\hat{p}}_{\mathcal D_l} , \mathbf{\hat{p}}_{-\mathcal D_l})| t-1, (\mathbf{\check{p}}_{\mathcal D_l}, \mathbf{\hat{p}}_{-\mathcal D_l}))\\
    &|P(t-1, (\mathbf{\check{p}}_{\mathcal D_l}, \mathbf{\hat{p}}_{-\mathcal D_l}) | t_0, \mu) - \pi_{K_t}((\mathbf{\check{p}}_{\mathcal D_l}, \mathbf{\hat{p}}_{-\mathcal D_l}))| d\mathbf{\check{p}}_{\mathcal D_l} d\mathbf{\hat{p}}\\
    =& \smallint_{\mathcal P_{-\mathcal D_l}}\smallint_{\mathcal P_{\mathcal D_l}} |P(t-1, (\mathbf{\check{p}}_{\mathcal D_l}, \mathbf{\hat{p}}_{-\mathcal D_l}) | t_0, \mu) \\
    & - \pi_{K_t}((\mathbf{\check{p}}_{\mathcal D_l}, \mathbf{\hat{p}}_{-\mathcal D_l}))|d\mathbf{\check{p}}_{\mathcal D_l}d\mathbf{\hat{p}}_{-\mathcal D_l}\\
    &\smallint_{\mathcal P_{\mathcal D_l}}P(t, (\mathbf{\hat{p}}_{\mathcal D_l} , \mathbf{\hat{p}}_{-\mathcal D_l})| t-1, (\mathbf{\check{p}}_{\mathcal D_l}, \mathbf{\hat{p}}_{-\mathcal D_l})) d\mathbf{\hat{p}}_{\mathcal D_l}\\
    =& \smallint_{\mathcal P_{-\mathcal D_l}}\smallint_{\mathcal P_{\mathcal D_l}} |P(t-1, (\mathbf{\check{p}}_{\mathcal D_l}, \mathbf{\hat{p}}_{-\mathcal D_l}) | t_0, \mu)\\
    & - \pi_{K_t}((\mathbf{\check{p}}_{\mathcal D_l}, \mathbf{\hat{p}}_{-\mathcal D_l}))|d\mathbf{\check{p}}_{\mathcal D_l}d\mathbf{\hat{p}}_{-\mathcal D_l}\\
    =& \smallint_{\mathcal P}|P(t-1, (\mathbf{\check{p}}_{\mathcal D_l}, \mathbf{\hat{p}}_{-\mathcal D_l}) | t_0, \mu) - \pi_{K_t}((\mathbf{\check{p}}_{\mathcal D_l}, \mathbf{\hat{p}}_{-\mathcal D_l}))|d\mathbf{p}\\
    =& ||P(t-1,\cdot| t_0, \mu) - \pi_{K_t}||,
\end{align*}
where the second equality holds by Lemma \ref{lem: Stationary}, i.e.,
\begin{align*}&\pi_{K_t}(\mathbf{\hat{p}}) =\\
& \smallint_{\mathcal P_{\mathcal D_l}}\pi_{K_t}((\mathbf{\check{p}}_{\mathcal D_l}, \mathbf{\hat{p}}_{-\mathcal D_l}))P(t, (\mathbf{\hat{p}}_{\mathcal D_l} , \mathbf{\hat{p}}_{-\mathcal D_l}) | t-1, (\mathbf{\check{p}}_{\mathcal D_l}, \mathbf{\hat{p}}_{-\mathcal D_l})) d\mathbf{\check{p}}_{\mathcal D_l}.
\end{align*}

Observe that as $t \rightarrow \infty,$ $\pi_{K_t}$ will have higher probability in each optimal power configuration. It can be shown that there exists an $N\in\mathbb{N}$ large enough, such that for $t\geq N,$ $\pi_{K_t}(\mathbf p)$ is strictly increasing in $t,$ for each $\mathbf p \in \mathcal P^*;$ while $\pi_{K_t}(\mathbf p)$ is strictly decreasing in $t,$ for each $\mathbf p\notin \mathcal P^*.$
Thus, we have
$$\sum_{t=1}^\infty ||\pi_{K_t} - \pi_{K_{t+1}}|| < \infty.$$

We then have that, starting with distribution $\pi_{K_{t^*}}$ at time $t_0,$
\begin{align*}
    &||P(t, \cdot| t_0, \pi_{K_{t^*}}) -  \pi_{K_{t^*}}|| \\
\leq& ||P(t, \cdot| t_0, \pi_{K_{t^*}}) - \pi_{K_t}|| + ||\pi_{K_{t}} - \pi_{K_{t^*}}||\\
\leq& ||P(t-1, \cdot| t_0, \pi_{K_{t^*}}) - \pi_{K_t}|| + ||\pi_{K_{t}} - \pi_{K_{t^*}}||\\
\leq& ||P(t-1, \cdot| t_0, \pi_{K_{t^*}}) - \pi_{K_{t-1}}||\\
& + ||\pi_{K_{t-1}} - \pi_{K_{t}}|| + ||\pi_{K_{t}} - \pi_{K_{t^*}}||\\
&\vdots\\
\leq& ||P(t_0, \cdot| t_0, \pi_{K_{t^*}}) - \pi_{K_{t_0}}||\\
& + \sum_{m=t_0}^{t-1} ||\pi_{K_m} - \pi_{K_{m+1}}|| + ||\pi_{K_m} - \pi_{K_{t^*}}||
\end{align*}

The first term, $||P(t_0, \cdot| t_0, \pi_{K_{t^*}}) - \pi_{K_{t_0}}||=0,$ because we assume that the process starts with distribution $\pi_{K_{t_0}} = \pi_{K_{t^*}}.$
Since $\sum_{t=1}^\infty ||\pi_{K_{t}} - \pi_{K_{t+1}}|| < \infty,$ we have that, for given $\varepsilon >0,$ there exist $N\in \mathbb{N},$ such that for $t>t_0\geq N,t^*\geq N,$
$$\sum_{m=t_0}^{t-1} ||\pi_{K_m} - \pi_{K_{m+1}}|| < \frac{\varepsilon}{2},$$
and
$$||\pi_{K_{t}} - \pi_{K_{t^*}}|| < \sum_{m=\min\{t^*,t\}}^{\max\{t^*,t\} - 1} ||\pi_{K_{m}} - \pi_{K_{m+1}}|| <\frac{\varepsilon}{2}.$$

The lemma then follows.
\end{proof}

Finally, we show how to establish Theorem \ref{thm: Optimal} by Lemma \ref{lem: densityconvergence} and Lemma \ref{lem: stationaryconvergence}.
\begin{align}\label{eq: z1}
&||P(t,\cdot| 0,\mathbf p) - \pi_{K_{t^*}}||\nonumber\\
=& \left\|\smallint_{\mathcal P} P(t,\cdot| t_0, \mathbf{\hat{p}}) P(t_0, \mathbf{\hat{p}} | 0, \mathbf p) d\mathbf{\hat{p}} - \pi_{K_{t^*}}\right\|\nonumber\\
\leq & \left\|\smallint_{\mathcal P} P(t, \cdot| t_0, \mathbf{\hat{p}})P(t_0, \mathbf{\hat{p}} | 0, \mathbf p) d\mathbf{\hat{p}} - P(t, \cdot| t_0, \pi_{K_{t^*}})\right\|\nonumber\\
+&\|P(t, \cdot| t_0, \pi_{K_{t^*}}) - \pi_{K_{t^*}} \|
\end{align}

By Lemma \ref{lem: stationaryconvergence}, for given $\varepsilon >0,$ there exists $N_1\in \mathbb N,$ such that if $t\geq t_0\geq N_1,$ $t^*\geq N_1,$
\begin{equation}\label{eq: z2}
    \|P(t, \cdot| t_0, \pi_{K_{t^*}}) - \pi_{K_{t^*}} \| < \frac{\varepsilon}{4}
\end{equation}

Consider the first term,
\begin{align*}
& \left\|\smallint_{\mathcal P} P(t, \cdot| t_0, \mathbf{\hat{p}})P(t_0, \mathbf{\hat{p}} | 0, \mathbf p) d\mathbf{\hat{p}} - P(t, \cdot| t_0, \pi_{K_{t^*}})\right\|\\
=& \left\|\smallint_{\mathcal P} P(t, \cdot| t_0, \mathbf{\hat{p}})P(t_0, \mathbf{\hat{p}} | 0, \mathbf p) d\mathbf{\hat{p}}\right.\\
 &- \left.\smallint_{\mathcal P} P(t, \cdot| t_0, \mathbf{\hat{p}}) \pi_{K_{t^*}}(\mathbf{\hat{p}}) d\mathbf{\hat{p}}\right\|\\
=& \left\|\smallint_{\mathcal P}  P(t, \cdot| t_0, \mathbf{\hat{p}}) (P(t_0, \mathbf{\hat{p}} | 0, \mathbf p) - \pi_{K_{t^*}}(\mathbf{\hat{p}}))\right\|\\
=& \smallint_{\mathcal P} \left|\smallint_{\mathcal P} P(t, \mathbf p| t_0, \mathbf{\hat{p}}) (P(t_0, \mathbf{\hat{p}} | 0, \mathbf p) - \pi_{K_{t^*}}(\mathbf{\hat{p}})) d\mathbf{\hat{p}}\right| d\mathbf p\\
=& \smallint_{\mathcal P}\left|\smallint_{\mathcal P}(P(t, \mathbf p| t_0, \mathbf{\hat{p}}) - P(t, \mathbf p| t_0, \mathbf{\check{p}}))\right.\\
 &\left.(P(t_0, \mathbf{\hat{p}} | 0, \mathbf p) - \pi_{K_{t^*}}(\mathbf{\hat{p}})) d\mathbf{\hat{p}}\right| d\mathbf p\\
\leq&  \smallint_{\mathcal P}  \smallint_{\mathcal P}\left|(P(t, \mathbf p| t_0, \mathbf{\hat{p}}) - P(t, \mathbf p| t_0, \mathbf{\check{p}}))\right|\\
 &\left|(P(t_0, \mathbf{\hat{p}} | 0, \mathbf p) - \pi_{K_{t^*}}(\mathbf{\hat{p}}))\right| d\mathbf{\hat{p}} d\mathbf p
\end{align*}

By Lemma \ref{lem: densityconvergence}, for any given $\varepsilon >0,$ there exists $N\in \mathbb N,$ such that for $t\geq N,$
$$\left|(P(t, \mathbf p| t_0, \mathbf{\hat{p}}) - P(t, \mathbf p| t_0, \mathbf{\check{p}}))\right|\leq \frac{\varepsilon}{4\prod_{a\in \mathcal V} p_a^{\rm max}}.$$
Thus,
\begin{align}\label{eq: z3}
& \left\|\smallint_{\mathcal P} P(t, \cdot| t_0, \mathbf{\hat{p}})P(t_0, \mathbf{\hat{p}} | 0, \mathbf p) d\mathbf{\hat{p}} - P(t, \cdot| t_0, \pi_{K_{t^*}})\right\| \nonumber\\
\leq&  \smallint_{\mathcal P} \smallint_{\mathcal P}\left|(P(t, \mathbf p| t_0, \mathbf{\hat{p}}) - P(t, \mathbf p| t_0, \mathbf{\check{p}}))\right|\nonumber\\
 &\left|(P(t_0, \mathbf{\hat{p}} | 0, \mathbf p) - \pi_{K_{t^*}}(\mathbf{\hat{p}}))\right| d\mathbf{\hat{p}} d\mathbf p\nonumber\\
\leq & \smallint_{\mathcal P}  \smallint_{\mathcal P} \frac{\varepsilon}{8\prod_{a\in \mathcal V} p_a^{\rm max}} \left|(P(t_0, \mathbf{\hat{p}} | 0, \mathbf p) - \pi_{K_{t^*}}(\mathbf{\hat{p}}))\right|d\mathbf{\hat{p}} d\mathbf p\nonumber\\
\leq & \smallint_{\mathcal P} \frac{\varepsilon}{4\prod_{a\in \mathcal V} p_a^{\rm max}}\nonumber\\
\leq & \frac{\varepsilon}{4}.
\end{align}

Hence, by equalities (\ref{eq: z1}),(\ref{eq: z2}) and (\ref{eq: z3}),
$$\|P(t, \cdot| 0, \mathbf p) - \pi_{K_{t^*}}\| < \frac{\varepsilon}{2}.$$

Notice that for given $\varepsilon>0,$ there exists $N_2\in \mathbb N$ such that for $t^* \geq N_2,$ we have
\begin{equation}\label{eq: z4}
\smallint_{\mathcal P_\delta^*} \pi_{K_{t^*}}(\mathbf p) d\mathbf p > 1 - \varepsilon/2.
\end{equation}
By selecting $N = \max\{N_1, N_2\},$ inequalities (\ref{eq: z2}) and (\ref{eq: z4}) hold, which imply that
\begin{align*}
    &\smallint_{\mathcal P_{\delta}^*} P(t, \mathbf{\hat{p}}| 0, \mathbf p) d\mathbf{\hat{p}}\\
\geq&\smallint_{\mathcal P_{\delta}^*} \pi_{K_{t^*}}(\mathbf{\hat{p}}) - |P(t, \mathbf{\hat{p}}| 0, \mathbf p) -  \pi_{K_{t^*}}(\mathbf{\hat{p}})| d\mathbf{\hat{p}}\\
= &  \smallint_{\mathcal P_{\delta}^*} \pi_{K_{t^*}}(\mathbf{\hat{p}}) d\mathbf{\hat{p}} - \smallint_{\mathcal P_{\delta}^*} |P(t, \mathbf{\hat{p}}| 0, \mathbf p) -  \pi_{K_{t^*}}(\mathbf{\hat{p}})| d\mathbf{\hat{p}}\\
>& 1 - \frac{\varepsilon}{2} - \frac{\varepsilon}{2}\\
>& 1 - \varepsilon.
\end{align*}

We then conclude the theorem.

\bibliographystyle{abbrv}
\bibliography{./Shan_reference}  
%
%
\end{document}